\documentclass{article}
\usepackage{fix-cm}
\usepackage{flexvdr_preprint,times}
\usepackage{amsmath,amssymb,amsthm,booktabs,graphicx,microtype,xcolor,xspace,url}
\usepackage{algorithm,algorithmic,colortbl,array,tikz}
\usepackage{placeins,pifont,needspace}
\definecolor{datasetgray}{gray}{0.92}
\usepackage{hyperref}
\definecolor{routepurple}{HTML}{7038AD}
\hypersetup{colorlinks=true,citecolor=cyan,linkcolor=routepurple,urlcolor=routepurple,
pdftitle={FlexVDR: Learning to Allocate Matryoshka Representation Budgets for Visual Document Retrieval},pdfauthor={Haowen Xiang, Yibo Yan, Xin Zou, Shuliang Liu, Yi Cao, Mingdong Ou, Xuming Hu}}
\newcommand{\method}{\textsc{FlexVDR}\xspace}
\newcommand{\R}{\mathcal R}
\newcommand{\E}{\mathbb E}
\definecolor{algcommentcolor}{HTML}{008080}
\newcommand{\algstage}[1]{\item[]\textcolor{algcommentcolor}{\footnotesize\ttfamily /* #1 */}}
\newcommand{\algcomment}[1]{\hfill\textcolor{algcommentcolor}{\footnotesize\ttfamily // #1}}

\newtheorem{proposition}{Proposition}
\title{FlexVDR: Learning to Allocate\\Matryoshka Representation Budgets \\for Visual Document Retrieval}
\author{\parbox[t]{\dimexpr\textwidth-2\tabcolsep\relax}{\centering\normalfont
Haowen Xiang$^{1}$, Yibo Yan$^{1,2,3}$, Xin Zou$^{1,3}$, Shuliang Liu$^{1}$\\[3pt]
Yi Cao$^{2}$, Mingdong Ou$^{2}$, Xuming Hu$^{1,3}$\\[7pt]
{\small
$^{1}$\,The Hong Kong University of Science and Technology (Guangzhou)\\
$^{2}$\,Alibaba Group\\
$^{3}$\,The Hong Kong University of Science and Technology\endgraf}
}}
\begin{document}
\maketitle
\begin{abstract}
Visual document retrieval relies on rich multi-vector representations to capture fine-grained evidence in document pages, but encoding and matching these representations can be computationally expensive. Existing compression and elastic representation methods expose smaller configurations, yet making multiple budgets available does not determine which configuration a query should use. The allocation decision must estimate the retrieval value of alternative representations before incurring their full computation, while accommodating different resource preferences. We introduce \method, a framework for query-adaptive budget allocation within a single retriever. We construct a joint family of routes over encoder depth, embedding width, and token count. Retrieval supervision and distillation train the compact routes, while a native bypass preserves the full-budget reference. A Budget Allocator predicts route-specific retrieval utility from early query features and combines these predictions with a configurable resource preference to select a route. Experiments across five backbones on ViDoRe V1, V2, and V3 public show that \method retains much of the native retrieval quality while selecting compact routes for individual queries. A configurable resource preference allows the same retriever to adjust its computation without retraining or switching models.
\end{abstract}
\section{Introduction}
Visual document retrieval (VDR) retrieves pages from their rendered appearance, preserving evidence in tables, figures, and layout that can be difficult to recover through text extraction. Multi-vector retrievers such as ColPali \citep{faysse2025colpali} combine a vision--language encoder with token-level late interaction to capture fine-grained query--page correspondences. With document representations prepared offline, online retrieval still requires query encoding and vector-level matching against the stored representations \citep{khattab2020colbert}. The resulting challenge is to reduce this computation while retaining the information needed for accurate retrieval.

Compression and elastic representation learning make smaller representations available: Matryoshka embeddings support nested widths of the final-layer representation \citep{kusupati2022matryoshka}, while depth-aware extensions make intermediate-layer representations available for retrieval (Figure~\ref{fig:overview}(a,b)) \citep{wang20252d,xiang2026mmmatryoshka}. Token pruning and pooling reduce the number of document vectors involved in matching \citep{yan2025docpruner,roy2026colsnap}. Together, these approaches provide choices over encoder depth, embedding width, and token count.

\begin{figure}[t]
\centering
\includegraphics[width=\linewidth]{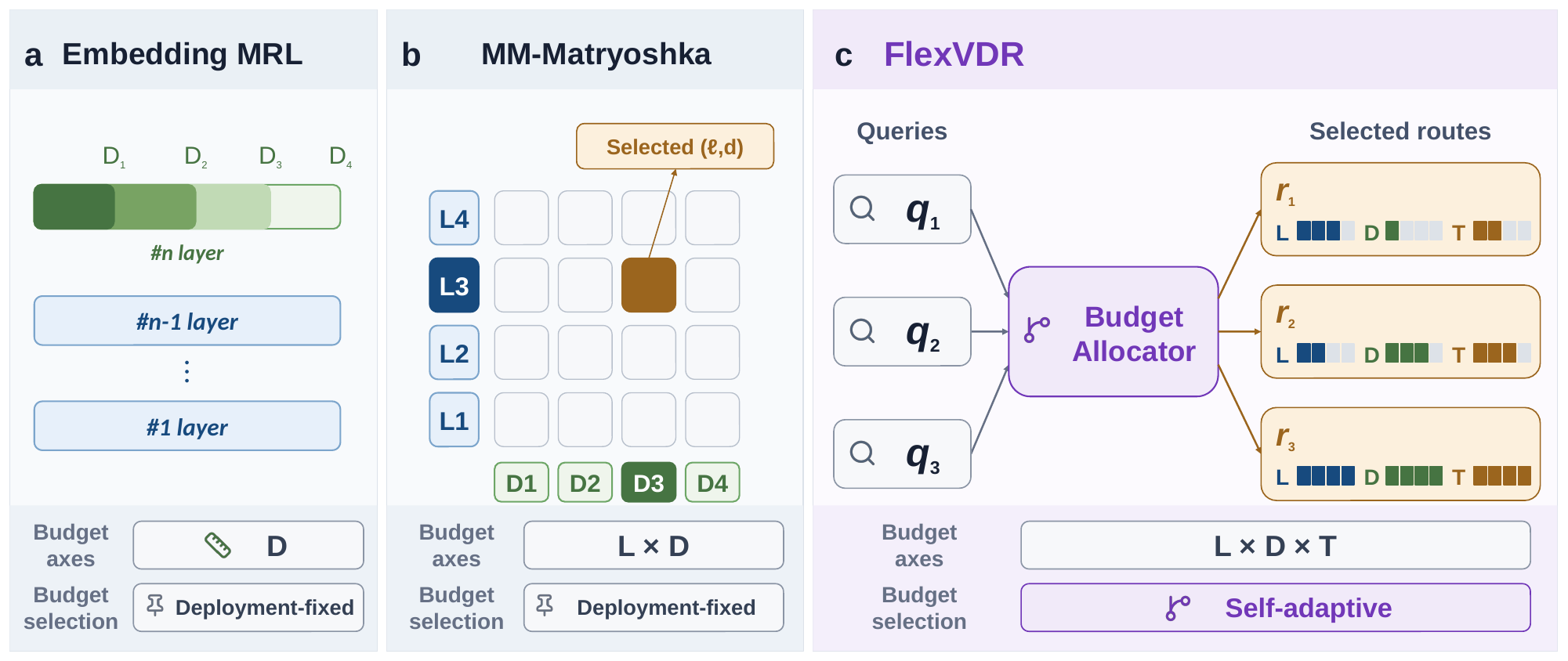}
\caption{\textbf{Representation choices and budget selection.} (a) Embedding MRL provides nested widths at the final layer; (b) MM-Matryoshka provides depth--width choices. Both illustrated settings use one fixed configuration across queries. (c) \method selects depth, width, and query/document token budgets per query. $L$, $D$, and $T$ label the depth, width, and token-count axes.}
\label{fig:overview}
\end{figure}

The evidence that makes a page relevant is query-specific: queries may target information in text, tables, or figures, and late-interaction matching associates individual query terms with different page regions \citep{faysse2025colpali}. A deployment-fixed configuration, however, assigns the same depth, width, and token count to every query. Choosing such a configuration sets a common compromise between representation capacity and computation across the query workload. Adding more configurations broadens the available choices, but a fixed selection still leaves the incoming query without a role in deciding which representation to use. Learning useful compact representations and deciding when to use each one therefore remain distinct parts of the retrieval problem. This motivates our central question: \emph{how can a single retriever allocate representation budgets to individual queries while accommodating different resource preferences?}

Query-adaptive allocation poses two design requirements. Dynamic early-exit models provide a precedent by using intermediate predictions and adjustable inference-time thresholds to control encoder computation \citep{xin2020deebert,liu2020fastbert}. \ding{182} The allocation choice must be made before executing the remaining encoder layers, using query information available at an early stage. Running every candidate route before choosing one would already incur the computation that allocation is intended to avoid. The decision must therefore anticipate the retrieval value of query representations that have not yet been produced. \ding{183} Changing the desired computation level should not require learning a new allocation rule. A query may be processed under different resource preferences, so its representation choice should remain adjustable after training.

To address these requirements, we introduce \method, a framework for query-adaptive budget allocation within a single retriever (Figure~\ref{fig:overview}(c)). Its Budget Allocator predicts the retrieval utility of candidate routes from early query features and combines these estimates with a resource preference to select a route. The utility estimates describe what each representation is expected to offer for the current query, while the resource preference determines how strongly selection favors compact configurations. Learning the utilities independently of this preference allows the same predictions to be reused when the desired quality--computation trade-off changes. The candidate routes jointly vary encoder depth, embedding width, and query/document token count, with learned compact representations and the native full route providing the available choices. The resulting retriever can adjust its computation without retraining or switching models.

Our contributions are as follows:
\begin{itemize}
    \item[\ding{182}] \textbf{Query-adaptive budget allocation.} We develop a Budget Allocator that predicts route-specific retrieval utility from early query features and uses a separately specified resource preference to select a representation budget.
    \item[\ding{183}] \textbf{A joint representation family for allocation.} We integrate selectable encoder depth, embedding width, and query/document token count into a common route family, with compact representations and the native full route as allocation choices.
    \item[\ding{184}] \textbf{Evaluation across backbones and benchmarks.} We evaluate five backbones on ViDoRe V1, V2, and V3 public, and examine computation across resource preferences.
\end{itemize}
\section{Related work}
\paragraph{Multimodal multi-vector retrieval.}
Multi-vector retrieval preserves fine-grained evidence through token-level matching. ColBERT established contextualized late interaction for text retrieval \citep{khattab2020colbert}; ColPali transfers this approach to page images \citep{faysse2025colpali}. Subsequent released models expand backbone capacity and retrieval training, including Nemotron ColEmbed V2 and webAI-ColVec \citep{moreira2026nemotron,webai2026colvec}. ViDoRe V1 \citep{faysse2025colpali} evaluates this setting across visually rich domains, while ViDoRe V2 \citep{mace2025vidore} introduces more demanding query construction and relevance judgments. ViDoRe V3 public \citep{loison2026vidorev3} extends evaluation to multimodal retrieval-augmented generation with diverse query types and source-grounding annotations. These retrievers and benchmarks establish the visual retrieval setting in which we study how much representation capacity to allocate to each query.
\paragraph{Matryoshka representation learning.}
Matryoshka representation learning concentrates useful information in nested dimension prefixes \citep{kusupati2022matryoshka}. Two-dimensional training additionally uses encoder depth \citep{wang20252d}; MM-Matryoshka studies depth--dimension elasticity for multi-vector VDR \citep{xiang2026mmmatryoshka}. Complementary compression methods reduce the storage or number of document vectors. ColBERTv2 reduces storage through residual compression, representing token vectors as centroid identifiers and quantized residuals \citep{santhanam2022colbertv2}. DocPruner reduces page embeddings using document-dependent patch importance \citep{yan2025docpruner}. ColSNAP learns spatially pooled compression tiers that can be chosen at indexing time \citep{roy2026colsnap}. \citet{qin2026multivectorcompression} study query-agnostic document compression under a fixed vector budget, using attention-guided clustering to select centroids and aggregate token representations. Together, nested representations and vector compression provide ways to vary representation budgets. We build a joint depth--width--token family and learn to select its budget for each incoming query.
\paragraph{Budget-oriented search.}
Input-dependent computation controls how much encoding or search work an input requires. Dynamic early-exit methods such as DeeBERT and FastBERT use intermediate prediction confidence to decide whether encoder computation should continue, with thresholds adjustable after training \citep{xin2020deebert,liu2020fastbert}. In retrieval, AdANNS uses nested representations to construct adaptive approximate-nearest-neighbor search pipelines \citep{rege2023adanns}. Within late interaction, Argus conditions region-level expert routing on the query and cached document features to construct query-dependent document representations \citep{abdallah2026argus}. \citet{rishi2026queryaware} formulate query-aware document-token selection for second-stage reranking as a MaxSim coverage problem under a per-page token budget. Col-Bandit reduces query-time scoring work by progressively evaluating MaxSim contributions and using confidence bounds to eliminate uncompetitive candidates \citep{pony2026colbandit}. The allocation decision in \method concerns the joint encoder depth, embedding width, and query/document token count. It predicts route-specific retrieval utilities from shared early query features and selects one route before the remaining query-encoding and matching computation. Learning these utilities separately from the resource preference allows the same fitted predictor to support different choices within this joint budget space.
\section{\method: Learning and allocating representation budgets}
\label{sec:method}
\paragraph{Overview.}
\method first learns selectable representations around a frozen native retriever, then fixes this family and trains a Budget Allocator to predict route utilities (Figure~\ref{fig:method}(a,b)). At query time, early query features and a resource preference determine the route used for continued encoding and document matching (Figure~\ref{fig:method}(c) and Algorithm~\ref{alg:query_retrieval}).
\subsection{Problem formulation}
\label{sec:problem}
Given a text query $q$ and a corpus of document pages $\mathcal X$, visual document retrieval ranks pages by their relevance to the query. We consider a finite set $\R$ of representation configurations, called \emph{routes}, including the native full route $r_F$. Each route specifies encoder depth, embedding width, and query/document token budgets. Under route $r$, matrices $\mathbf Q_r(q)\in\mathbb R^{B_q\times d}$ and $\mathbf D_r(x)\in\mathbb R^{B_x\times d}$ contain query and page vectors as rows. Here $B_q$ and $B_x$ are the retained vector counts and $d$ is their common embedding width; these quantities depend on the route and, for vector counts, the input. Following late interaction \citep{khattab2020colbert,faysse2025colpali}, pages are ranked by
\begin{equation}
s_r(q,x)=\sum_{i=1}^{B_q}\max_{1\le j\le B_x} [\mathbf Q_r(q)]_{i,:}[\mathbf D_r(x)]_{j,:}^{\top}.
\label{eq:maxsim}
\end{equation}
A query-adaptive policy $\pi$ selects $\pi(q)\in\R$ before the remaining query encoding and matching, then ranks the corpus using $s_{\pi(q)}(q,x)$ with query and document representations from that route. A fixed configuration instead uses one route for every query. Document representations are prepared offline; query encoding and matching contribute to online computation. Our objective is to allocate representation budgets to individual queries, balancing retrieval quality and online computation under a specified resource preference.

\begin{figure}[!t]
\centering
\includegraphics[width=\linewidth]{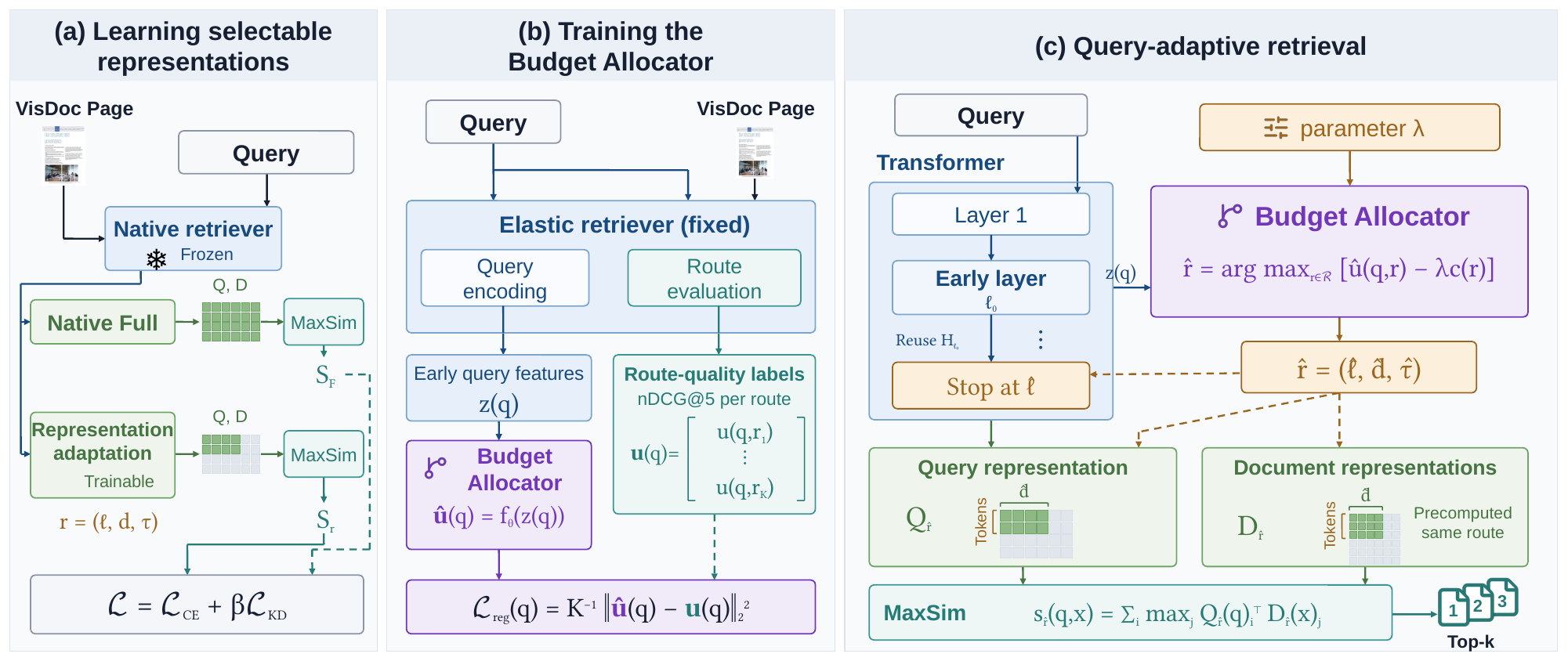}
\caption{\textbf{Training and query-time retrieval in \method.} (a) Retrieval supervision and distillation from frozen Native Full scores $\mathbf S_F$ train compact routes with scores $\mathbf S_r$. (b) Early query features $\mathbf z(q)$ and route utilities $\mathbf u(q)$ train the Budget Allocator. (c) Predicted utilities and resource preference $\lambda$ select a route; encoding stops at its chosen exit, followed by MaxSim matching with document representations from the same route.}
\label{fig:method}
\end{figure}

\begin{algorithm}[t]
\caption{Query-adaptive budget allocation and retrieval in \method}
\label{alg:query_retrieval}
\begin{algorithmic}[1]
\REQUIRE Query $q$; trained elastic retriever and value predictor $f_\theta$; route set $\R$; penalties $c(r)$; resource preference $\lambda$; offline document streams for corpus $\mathcal X$
\ENSURE Document pages ranked for $q$
\algstage{Part 1: Select a query-adaptive budget}
\STATE $\ell_0\gets\min_{r\in\R}\ell(r)$
\STATE Encode $q$ through $\ell_0$ and retain the token-level states $\mathbf H_{\ell_0}(q)$
\STATE Pool $\mathbf H_{\ell_0}(q)$ into $\mathbf z(q)$ and predict $\hat{\mathbf u}(q)=f_\theta(\mathbf z(q))$ (Equation~\ref{eq:values})
\STATE $\hat r=(\hat\ell,\hat d,\hat\tau)\gets\arg\max_{r\in\R}[\hat u(q,r)-\lambda c(r)]$ \algcomment{Joint depth/width/tokens}
\algstage{Part 2: Execute the selected route}
\IF[Reuse early states]{$\hat\ell>\ell_0$}
\STATE Continue encoding from $\mathbf H_{\ell_0}(q)$ through layer $\hat\ell$ to obtain $\mathbf H_{\hat\ell}(q)$
\ENDIF
\IF{$\hat r=r_F$}
\STATE $\mathbf Q_{\hat r}(q)\gets \mathbf E_F(q)$, the native output from the already computed $\mathbf H_L(q)$
\ELSE
\STATE Obtain $\widetilde{\mathbf E}_{\hat\ell}(q)$ from $\mathbf H_{\hat\ell}(q)$ using Equation~\ref{eq:compactbase}
\STATE $\mathbf Q_{\hat r}(q)\gets P_{\hat\tau}\!\left(\operatorname{Norm}_{\mathrm{row}}\!\left(\Pi_{\hat d}\widetilde{\mathbf E}_{\hat\ell}(q)\right)\right)$
\ENDIF
\algstage{Part 3: Match representations from the same route}
\STATE Obtain $\mathbf D_{\hat r}(x)$ for each $x\in\mathcal X$ from the offline streams using Equation~\ref{eq:representation}
\STATE Compute $s_{\hat r}(q,x)$ by MaxSim for each $x\in\mathcal X$ (Equation~\ref{eq:maxsim})
\RETURN Pages sorted by $s_{\hat r}(q,x)$ in descending order
\end{algorithmic}
\end{algorithm}

\subsection{Learning selectable matryoshka representations}
\label{sec:routes}
\label{sec:train}
\paragraph{Representation construction.}
A route $r=(\ell,d,\tau)\in\R$ specifies an encoder exit $\ell\le L$, embedding width $d\le D$, and shared output-token retention ratio $0<\tau\le1$, where $L$ and $D$ are the native depth and width. The finite route set crosses candidate values along all three axes and includes the native full route $r_F=(L,D,1)$. For valid output lengths $n_q$ and $n_x$, it retains $\lceil\tau n_q\rceil$ query vectors and $\lceil\tau n_x\rceil$ document vectors.

For a query or page $a$, let $\mathbf H_\ell(a)$ contain its layer-$\ell$ encoder states and $\mathbf E_F(a)$ its native embeddings for valid output tokens. Token selections follow the native valid-token mask and order. Intermediate heads $G_\ell$ map hidden states to width $D$; final-layer compact routes apply a separate residual map $g$ to the native embeddings:
\begin{equation}
\widetilde{\mathbf E}_\ell(a)=
\begin{cases}
G_\ell\big(\mathbf H_\ell(a)\big), & \ell<L,\\
g\big(\mathbf E_F(a)\big), & \ell=L.
\end{cases}
\label{eq:compactbase}
\end{equation}
Intermediate heads are affine when hidden and retrieval widths differ and low-rank residual maps otherwise. The final-layer map is $g(\mathbf e)=\mathbf e+\mathbf W_{\mathrm{up}}\mathbf W_{\mathrm{down}}\mathbf e$ for each native vector $\mathbf e$. Heads are shared across queries, pages, and all width/token choices at their layer.

Let $\Pi_d$ retain the first $d$ coordinates, $\operatorname{Norm}_{\mathrm{row}}$ apply vector-wise L2 normalization, and $P_\tau$ retain the first $\lceil\tau n_a\rceil$ valid token rows, where $n_a$ is the valid output length. The route output is
\begin{equation}
\mathbf E_r(a)=
\begin{cases}
\mathbf E_F(a), & r=r_F,\\
P_\tau\!\left(\operatorname{Norm}_{\mathrm{row}}\!\left(\Pi_d\widetilde{\mathbf E}_\ell(a)\right)\right), & r\ne r_F.
\end{cases}
\label{eq:representation}
\end{equation}
Thus, compact outputs are renormalized after width truncation, and token selection affects matching vectors only, without pruning encoder tokens. The full route bypasses all adaptation and prefix operations. Setting $\mathbf Q_r(q)=\mathbf E_r(q)$ and $\mathbf D_r(x)=\mathbf E_r(x)$ yields the MaxSim inputs in Equation~\ref{eq:maxsim}.

\paragraph{Training objective.}
We train the adaptation maps using retrieval supervision and distillation from the frozen native full route (Figure~\ref{fig:method}(a)). Supervision across nested widths encourages the leading coordinates to retain retrieval information when later coordinates are omitted \citep{kusupati2022matryoshka}. For a minibatch of $N$ queries and $M$ candidate pages, define score matrices by $[\mathbf S_r]_{i,j}=s_r(q_i,x_j)$ and $[\mathbf S_F]_{i,j}=s_{r_F}(q_i,x_j)$. With $y_i\in\{1,\ldots,M\}$ indexing the paired positive page, we minimize
\begin{equation}
\mathcal L_r=\frac{1}{N}\sum_{i=1}^{N}\left[
\operatorname{CE}([\mathbf S_r]_{i,:},y_i)
+\beta T_{\mathrm{KD}}^2\operatorname{KL}\!\left(
\operatorname{softmax}([\mathbf S_F]_{i,:}/T_{\mathrm{KD}})\,\|\,
\operatorname{softmax}([\mathbf S_r]_{i,:}/T_{\mathrm{KD}})\right)\right].
\label{eq:loss}
\end{equation}
Here $\beta$ weights distillation and $T_{\mathrm{KD}}$ is its temperature. We update only the sampled route's adaptation parameters. After training, we freeze the elastic retriever before generating allocator labels. Appendix~\ref{app:implementation} gives route-sampling and optimization details.

\subsection{Training the Budget Allocator}
\label{sec:router}
The Budget Allocator predicts route utilities from the shallowest candidate exit, $\ell_0=\min_{r\in\R}\ell(r)$, which every query path reaches. At query time, selection uses these early features and route information established offline, without relevance labels, preliminary retrieval results, or full-depth features computed for selection. We pool the token states $\mathbf h_{\ell_0,i}(q)=[\mathbf H_{\ell_0}(q)]_{i,:}^{\top}$ using validity indicators $m_i$ and predict route utilities from the resulting query feature:
\begin{equation}
\mathbf z(q)=\operatorname{norm}_2\!\left(\frac{\sum_i m_i\mathbf h_{\ell_0,i}(q)}{\sum_i m_i}\right),
\qquad \hat{\mathbf u}(q)=f_\theta(\mathbf z(q))\in\mathbb R^{|\R|}.
\label{eq:values}
\end{equation}
Here $\operatorname{norm}_2$ denotes L2 normalization and $\hat u(q,r)$ predicts route $r$'s retrieval utility. The predictor uses a fixed sparse random projection followed by multi-output ExtraTrees regression.

\paragraph{Learning the route utilities.}
For each route, the frozen retriever ranks a deterministic local document pool with the paired page as the positive (Figure~\ref{fig:method}(b)). Query-level nDCG@5 defines $u(q,r)$; the $K=|\R|$ route scores form $\mathbf u(q)=[u(q,r_1),\ldots,u(q,r_K)]$. Supervising the full utility vector retains quality gaps between candidate routes, which a single best-route label would discard. Squared-error regression fits $f_\theta(\mathbf z(q))$ to this vector independently of the resource preference. The fitted predictor remains fixed during calibration and retrieval.

\subsection{Query-adaptive budget allocation and retrieval}
\label{sec:selection}
We pair predicted utility with a fixed structural penalty. Encoding depends on exit depth, while matching scales with query and document token counts and embedding width. Ignoring token rounding, their shared retention ratio gives relative matching volume $\tau^2d/D$, yielding
\begin{equation}
c(r)=\alpha\frac{\ell}{L}+(1-\alpha)\tau^2\frac{d}{D},
\qquad 0\le\alpha\le1,\qquad c(r_F)=1.
\label{eq:cost}
\end{equation}
The fixed $\alpha$ balances the encoding-depth and matching-volume terms; evaluated FLOPs additionally account for input lengths, padding, and backbone-specific operators (Appendix~\ref{app:flops}). For preference $\lambda\ge0$, selection is
\begin{equation}
\hat r=\hat\pi_\lambda(q)=\arg\max_{r\in\R}\big[\hat u(q,r)-\lambda c(r)\big].
\label{eq:decision}
\end{equation}
Larger $\lambda$ favors lower penalties. Because it enters only the selection rule, changing the preference requires neither new utility labels nor predictor refitting.

\paragraph{Theoretical Analysis.}
The selection rule provides control over the structural penalty and relates utility prediction accuracy to the quality of the allocation decision. For fixed predictor, routes, and penalties, $0\le\lambda_1<\lambda_2$ implies $c(\hat\pi_{\lambda_2}(q))\le c(\hat\pi_{\lambda_1}(q))$: increasing $\lambda$ cannot increase the selected structural penalty. This property concerns $c(r)$ rather than evaluated FLOPs. At fixed $\lambda$, define $v(q,r)=u(q,r)-\lambda c(r)$ for a reference utility $u$, and $r^*\in\arg\max_{r\in\R}v(q,r)$. If $\max_r|\hat u(q,r)-u(q,r)|\le\delta_u(q)$, then $v(q,r^*)-v(q,\hat\pi_\lambda(q))\le2\delta_u(q)$. Thus, bounded utility prediction error limits regret in the utility--penalty objective relative to the best available route. Proofs appear in Appendix~\ref{app:proofs}.

\paragraph{Pre-deployment calibration.}
Before deployment, calibration selects a resource preference by minimizing mean structural penalty while limiting quality loss from the native full route. For candidate preferences $\Lambda$ with a nonempty feasible subset, choose
\begin{equation}
\lambda^*\in\arg\min_{\lambda\in\Lambda}\widehat C_{\mathrm{cal}}(\hat\pi_\lambda)
\quad\text{s.t.}\quad
\widehat U_{\mathrm{cal}}(\hat\pi_\lambda)\ge\widehat U_{\mathrm{cal}}(r_F)-\varepsilon_{\mathrm{cal}}.
\label{eq:calibrate}
\end{equation}
Here $\widehat C_{\mathrm{cal}}$ is mean structural penalty, $\widehat U_{\mathrm{cal}}$ is retrieval quality against calibration-corpus judgments, and $\varepsilon_{\mathrm{cal}}$ is the permitted quality decrease.

\paragraph{Executing the selected route.}
Algorithm~\ref{alg:query_retrieval} and Figure~\ref{fig:method}(c) summarize query-time retrieval. After selection, encoding resumes from the retained states $\mathbf H_{\ell_0}(q)$ to $\hat\ell$; if $\hat\ell=\ell_0$, no further layers are needed. Equation~\ref{eq:representation} constructs $\mathbf Q_{\hat r}(q)$ for MaxSim matching against offline document representations $\mathbf D_{\hat r}(x)$ from the same route. The full branch reuses $\mathbf H_L(q)$ without restarting encoding.

Within each compact layer stream, width and token choices reuse prefixes of a maximum-width, maximum-token document representation, with dimension-prefix renormalization. The native full stream remains separate from the transformed final-layer compact stream.

% Main results; placement is left to the normal float mechanism.

\begin{table}[!t]
\centering
\caption{\textbf{Retrieval quality and query budget allocation.} Statistics are macro-averaged over tasks within each benchmark. \method reports mean nDCG@5 ($\times100$) $\pm$ sample SD over three controller seeds. Semicircles show mean query shares: purple for full budget (all three axes maximal), teal for lower budgets. Labels give the lower-budget share. Native baselines always use full budget.}
\label{tab:main}
\begingroup
\fontsize{7.2}{8.7}\selectfont
\setlength{\tabcolsep}{1.2pt}
\renewcommand{\arraystretch}{1.18}
\definecolor{fullcolor}{HTML}{7038AD}
\definecolor{lowercolor}{HTML}{42949E}
\definecolor{rowshade}{HTML}{F3EEF8}
\newcommand{\semipie}[1]{%
\begin{tikzpicture}[x=1mm,y=1mm,baseline=3.5pt]
\path[use as bounding box] (-4,0) rectangle (4,4);
\fill[lowercolor] (0,0)--(4,0) arc[start angle=0,end angle=180,radius=4]--cycle;
\pgfmathsetmacro{\endangle}{180-180*(#1)}
\ifdim #1pt>0pt
\fill[fullcolor] (0,0)--(-4,0) arc[start angle=180,end angle=\endangle,radius=4]--cycle;
\fi
\end{tikzpicture}}
\newcommand{\budgetcell}[2]{\rule[-5pt]{0pt}{15pt}\makebox[46pt][l]{\makebox[23pt][l]{\semipie{#1}}\hspace{1pt}\makebox[22pt][r]{\fontsize{6.2}{7}\selectfont\textcolor{lowercolor}{#2\%}}}}
\newcommand{\meansd}[2]{\makebox[44pt][c]{\makebox[18pt][r]{#1}\hspace{.7pt}{\fontsize{5.8}{6.5}\selectfont$\pm$#2}}}
\newcommand{\meanonly}[1]{\makebox[44pt][c]{\makebox[18pt][r]{#1}\hspace{.7pt}\phantom{\fontsize{5.8}{6.5}\selectfont$\pm$0.00}}}
\begin{tabular*}{\linewidth}{@{\extracolsep{\fill}}>{\raggedright\arraybackslash}p{105pt}*{3}{p{44pt}p{46pt}}@{}}
\toprule
Backbone & \multicolumn{2}{c}{\textbf{ViDoRe V1}} & \multicolumn{2}{c}{\textbf{ViDoRe V2}} & \multicolumn{2}{c}{\textbf{ViDoRe V3 public}} \\
\cmidrule(lr){2-3}\cmidrule(lr){4-5}\cmidrule(lr){6-7}
 & \centering nDCG@5 $\uparrow$ & \centering Budget usage\arraybackslash & \centering nDCG@5 $\uparrow$ & \centering Budget usage\arraybackslash & \centering nDCG@5 $\uparrow$ & \centering Budget usage\arraybackslash \\
\midrule
ColPali  {\fontsize{6.4}{7}\selectfont\citep{faysse2025colpali}} & \meanonly{83.90} & \makebox[46pt][c]{Full only} & \meanonly{53.82} & \makebox[46pt][c]{Full only} & \meanonly{45.74} & \makebox[46pt][c]{Full only} \\
\rowcolor{rowshade}
\hspace{3pt}\textbf{+ \method} & \meansd{83.16}{0.17} & \budgetcell{0.333980200431}{66.60} & \meansd{53.39}{0.23} & \budgetcell{0.163813747489}{83.62} & \meansd{44.28}{0.24} & \budgetcell{0.195615541914}{80.44} \\
\addlinespace[3pt]
ColQwen2  {\fontsize{6.4}{7}\selectfont\citep{faysse2025colpali}} & \meanonly{80.53} & \makebox[46pt][c]{Full only} & \meanonly{43.59} & \makebox[46pt][c]{Full only} & \meanonly{34.64} & \makebox[46pt][c]{Full only} \\
\rowcolor{rowshade}
\hspace{3pt}\textbf{+ \method} & \meansd{80.38}{0.10} & \budgetcell{0.876864893211}{12.31} & \meansd{43.32}{0.31} & \budgetcell{0.802404967125}{19.76} & \meansd{34.53}{0.07} & \budgetcell{0.819313414989}{18.07} \\
\addlinespace[3pt]
ColQwen2.5  {\fontsize{6.4}{7}\selectfont\citep{faysse2025colpali}} & \meanonly{82.27} & \makebox[46pt][c]{Full only} & \meanonly{47.60} & \makebox[46pt][c]{Full only} & \meanonly{42.47} & \makebox[46pt][c]{Full only} \\
\rowcolor{rowshade}
\hspace{3pt}\textbf{+ \method} & \meansd{82.14}{0.18} & \budgetcell{0.787205574077}{21.28} & \meansd{47.50}{0.18} & \budgetcell{0.810074497902}{18.99} & \meansd{42.31}{0.04} & \budgetcell{0.727761323709}{27.22} \\
\addlinespace[3pt]
webAI-ColVec1.1-8B  {\fontsize{6.4}{7}\selectfont\citep{webai2026colvec}} & \meanonly{86.66} & \makebox[46pt][c]{Full only} & \meanonly{60.63} & \makebox[46pt][c]{Full only} & \meanonly{53.69} & \makebox[46pt][c]{Full only} \\
\rowcolor{rowshade}
\hspace{3pt}\textbf{+ \method} & \meansd{86.50}{0.09} & \budgetcell{0.000333333333}{99.97} & \meansd{61.59}{0.57} & \budgetcell{0.000000000000}{100.00} & \meansd{51.69}{0.25} & \budgetcell{0.000000000000}{100.00} \\
\addlinespace[3pt]
Nemotron (8B)  {\fontsize{6.4}{7}\selectfont\citep{moreira2026nemotron}} & \meanonly{89.24} & \makebox[46pt][c]{Full only} & \meanonly{61.19} & \makebox[46pt][c]{Full only} & \meanonly{54.89} & \makebox[46pt][c]{Full only} \\
\rowcolor{rowshade}
\hspace{3pt}\textbf{+ \method} & \meansd{88.99}{0.13} & \budgetcell{0.116793128472}{88.32} & \meansd{61.09}{0.20} & \budgetcell{0.018189847858}{98.18} & \meansd{53.06}{0.25} & \budgetcell{0.032252016293}{96.77} \\
\bottomrule
\end{tabular*}
\par\vspace{5pt}
{\fontsize{7.4}{9}\selectfont
\textcolor{fullcolor}{\rule{7pt}{5pt}}\ Full budget\hspace{15pt}
\textcolor{lowercolor}{\rule{7pt}{5pt}}\ Lower budgets\par}
\endgroup
\end{table}

\section{Experiments}
\label{sec:experiments}
Our experiments examine retrieval quality across backbones and benchmarks, quality--computation trade-offs in three- and two-axis route spaces, and sensitivity to the resource preference.
\subsection{Experimental setup}
\label{sec:setup}
\paragraph{Backbones, benchmarks, and metrics.}
We evaluate ColPali, ColQwen2, ColQwen2.5, webAI-ColVec1.1-8B \citep{webai2026colvec}, and Nemotron ColEmbed V2 (8B) \citep{moreira2026nemotron} on ViDoRe V1, ViDoRe V2, and ViDoRe V3 public \citep{faysse2025colpali,mace2025vidore,loison2026vidorev3}, using each task's complete document corpus. We abbreviate the final two model names as webAI and Nemotron. We report nDCG@5 as the primary metric; supplementary recall@1 and recall@5 results are in Appendix~\ref{app:recall}. \emph{Native Full} denotes the released retrieval path under the evaluation preprocessing. Dataset and scoring details are in Appendix~\ref{app:data}.
\paragraph{Training and evaluation protocol.}
Each backbone supports 64 routes through one elastic checkpoint trained on the ColPali training corpus. The utility predictor is fitted on a controller-training partition grouped by document identity. For the main experiments, one preference $\lambda$ per backbone is selected by Equation~\ref{eq:calibrate} and used across all three benchmarks and controller seeds. Training details and selected preferences are in Appendices~\ref{app:implementation} and~\ref{app:calibration}.
\paragraph{Comparisons and computation.}
The variant study on Industrial compares learned allocation with Fixed 3D Budget, which assigns one depth--width--token configuration to every query, uniform random route selection (random), and three variants with two adaptive axes. All policies share each backbone's trained representation family, allowing us to compare fixed and query-adaptive allocation while holding representation learning constant. Section~\ref{sec:variants} defines the route choices. We report analytical online FLOPs for query encoding, output heads, matching, and Budget Allocator arithmetic. Control definitions and FLOPs calculations are detailed in Appendices~\ref{app:variants} and~\ref{app:flops}.
\subsection{Retrieval quality and representation budgets}
\method preserves retrieval quality close to Native Full while assigning a substantial share of queries to lower budgets (Table~\ref{tab:main}). Across all five backbones and three benchmarks, mean nDCG@5 decreases remain below two percentage points; on V1 and V2, all decreases are below one point. This quality retention accompanies frequent departures from the full-budget route. For example, ColPali assigns 66.60\% of V1 queries to lower budgets while retaining 83.16 nDCG@5 against the native 83.90. On V2, its lower-budget share reaches 83.62\%, with scores of 53.39 versus 53.82. Thus, retaining near-native retrieval quality does not require full representations for every query.

The scope for budget reduction differs markedly across backbones under the quality-preserving preferences. The ColQwen retrievers attain small quality changes with relatively conservative allocations: only 12.31--19.76\% of ColQwen2 queries and 18.99--27.22\% of ColQwen2.5 queries use lower budgets across the three benchmarks. In contrast, webAI assigns 99.97\% of V1 queries to lower budgets while preserving a score of 86.50 against the native 86.66. These operating points indicate different responses to compression: some backbones process nearly all queries at lower budgets with almost no quality loss, whereas others use more conservative budgets when preserving absolute retrieval performance is prioritized.

The two strongest native baselines in our evaluation, webAI and Nemotron, also exhibit the greatest use of lower budgets. For webAI, 99.97\% of V1 queries and 100\% of V2 and V3 public queries use lower budgets. On V2, it improves nDCG@5 from 60.63 to 61.59. No query selects the full-budget route on V2 or V3 public under these settings. Nemotron routes 88.32\%, 98.18\%, and 96.77\% of queries to lower budgets on V1, V2, and V3 public, respectively; its V2 score remains 61.09 against the native 61.19. After adding \method, webAI and Nemotron still occupy the top two positions on all three benchmarks. They therefore retain their absolute retrieval-quality advantage over the other evaluated backbones while assigning the vast majority of queries to lower budgets.
\subsection{Variant and hyperparameter studies}
\label{sec:components}

\subsubsection{Variant study}
\label{sec:variants}
Figure~\ref{fig:variants} evaluates whether learned allocation improves quality--computation trade-offs in both three-axis and two-axis route spaces. Fixed 3D Budget uses the same nominal retention ratio of 25\%, 50\%, 75\%, or 100\% for all three axes and assigns the resulting route to every query. These percentages describe representation budgets rather than FLOPs; exit layers are rounded as in Table~\ref{tab:grids}. Depth fixed+FlexVDR, Width fixed+FlexVDR, and Token Fixed+FlexVDR hold the named axis at 100\% while adapting the other two. Each restricted variant reuses the same elastic checkpoint and utility predictor, applying Equation~\ref{eq:decision} over its 16 remaining routes without refitting. The full three-axis policy selects from all 64 routes.

\begin{figure}[t]
\centering
\includegraphics[width=\linewidth]{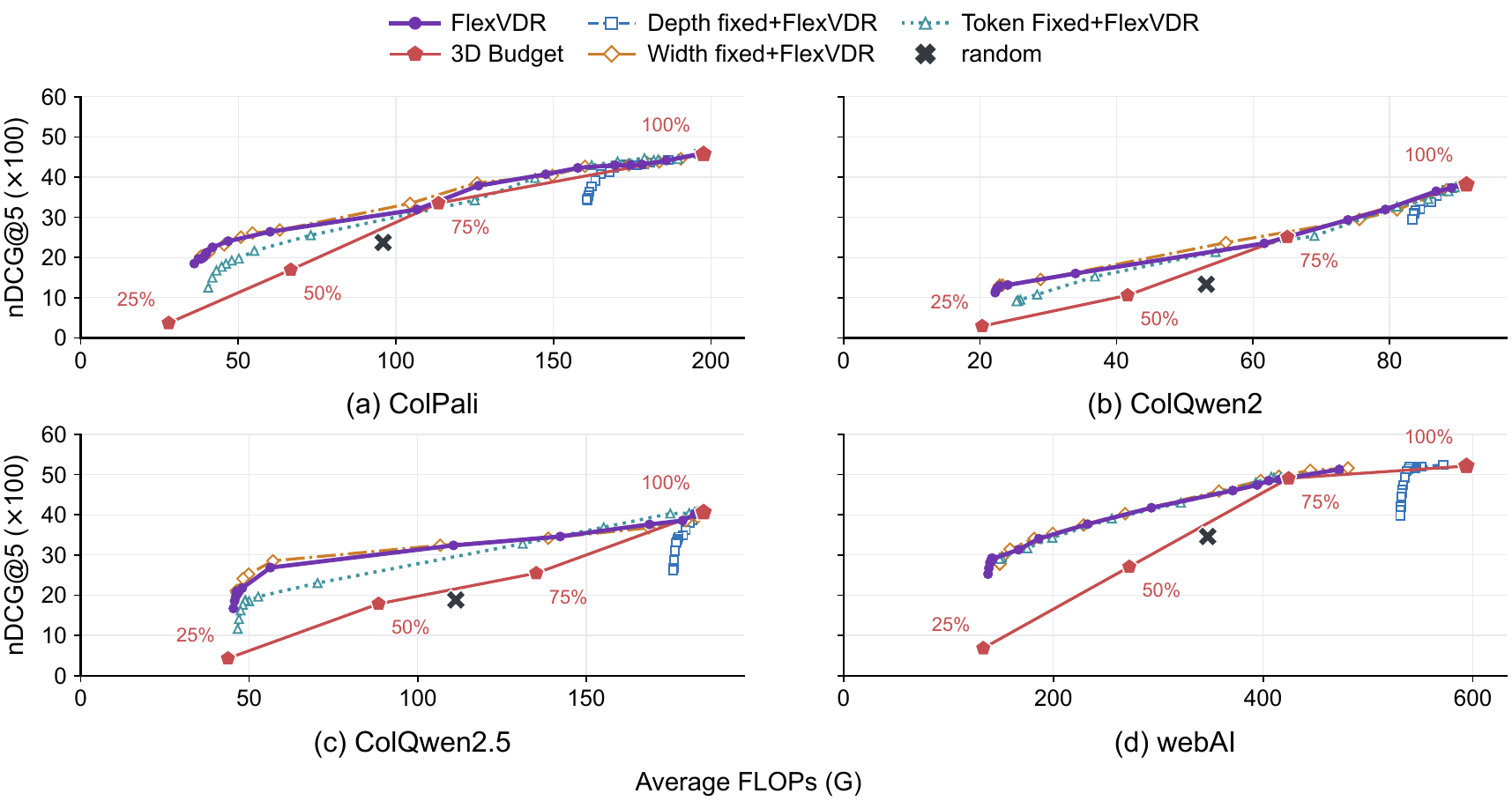}
\caption{\textbf{Variant study on Industrial, ViDoRe V3 public.} nDCG@5 ($\times100$) versus average online GFLOPs per query for 283 queries and the complete 5,244-page corpus. Adaptive curves sweep $\lambda\in[0,1]$; one-axis-fixed variants retain the named axis at 100\%. Red pentagons denote Fixed 3D Budget, using one route for all queries. Labels give the common nominal retention of all three axes; 100\% is Native Full. The random control samples uniformly from all 64 routes.}
\label{fig:variants}
\end{figure}
\paragraph{Learned allocation versus fixed and random choices.}
Query-adaptive allocation can improve retrieval quality while using fewer FLOPs than fixed-budget selection. On ColQwen2.5, \method gains 6.91 percentage points in nDCG@5 over the 75\% Fixed 3D Budget configuration while reducing FLOPs by 18.2\%. At a similar computation level to random selection, it achieves 32.40 nDCG@5 compared with 18.82. All four backbones have learned operating points with both higher quality and lower FLOPs than random selection.
\paragraph{Retaining native retrieval quality.}
\method also supports computation savings while retaining near-native retrieval quality. On webAI, it reduces FLOPs by 20.4\% relative to Native Full with an nDCG@5 decrease of 0.86 percentage points.
\paragraph{Allocation with two adaptive axes.}
Fixing width or token count still allows the remaining axes to recover much of the three-axis policy's quality--computation trade-off (Figure~\ref{fig:variants}). On webAI, the width-fixed and token-fixed variants reduce FLOPs by 19.0\% and 20.4\%, with nDCG@5 decreases of 0.51 and 0.86 percentage points relative to Native Full. These savings are close to those of the three-axis policy while retaining similarly near-native quality. The results suggest that the budget axes offer partially complementary ways to reduce computation: when one axis is unavailable for compression, adapting the others can retain a similar trade-off.

Fixing depth limits the attainable FLOPs reduction more strongly on this workload because every query still executes all encoder layers. Width and output-token reductions save matching work but leave that encoding cost intact (Appendix~\ref{app:flops}). The width-fixed and token-fixed variants retain the option to exit early, allowing substantially lower computation. The relative quality of the two-axis and three-axis policies varies across operating points; their curves do not establish a uniform ranking.
\subsubsection{Hyperparameter analysis}
\label{sec:calibration}
We vary the inference-time preference $\lambda$ with the representations and utility predictor fixed to examine its effect on computation and retrieval quality across five backbones (Figure~\ref{fig:preference}).

\begin{figure}[t]
\centering
\includegraphics[width=\linewidth]{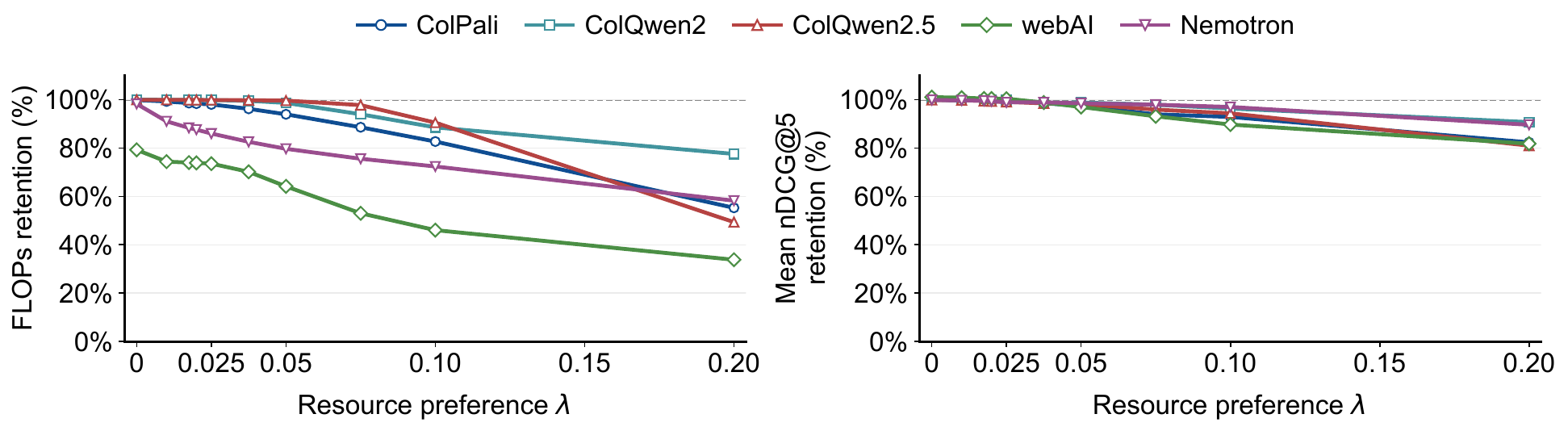}
\par\smallskip
\makebox[0.49\linewidth]{\small (a) Computation on the 1K workload}\hfill
\makebox[0.49\linewidth]{\small (b) Retrieval quality on V1/V2}
\caption{\textbf{Resource-preference sensitivity.} (a) Online FLOPs relative to Native Full on the 1,000-query/1,000-document workload. (b) V1/V2 nDCG@5 retention using all queries and complete corpora: each version's task-macro score is normalized by Native Full, then the two ratios are averaged equally. Dashed lines mark Native Full.}
\label{fig:preference}
\end{figure}
\paragraph{Control over computation.}
Increasing $\lambda$ generally reduces computation by favoring compact routes (Figure~\ref{fig:preference}(a)). At $\lambda=0.1$, webAI uses 46.1\% of Native Full FLOPs on the 1K workload, whereas ColQwen2.5 uses 90.5\%. Computation at the same preference differs across backbones.
\paragraph{Sensitivity of retrieval quality.}
Moderate preferences preserve most native retrieval quality, while stronger preferences incur larger losses (Figure~\ref{fig:preference}(b)). At $\lambda=0.05$, mean V1/V2 nDCG@5 retention ranges from 97.0\% to 98.9\% across the five backbones; at $\lambda=0.2$, it falls to 81.0\%--90.8\%. Appendix~\ref{app:calibration} gives the complete sensitivity grid and aggregation procedure.
\paragraph{Selecting the main preference.}
For the main results, we choose one $\lambda$ per backbone to favor lower computation while keeping retrieval quality close to Native Full. Equation~\ref{eq:calibrate} implements this principle by minimizing mean structural penalty among candidates whose calibration nDCG@5 decreases by at most 0.005 (0.5 percentage points). The selected values are 0.025 for ColPali and ColQwen2, 0.0175 for ColQwen2.5 and Nemotron, and 0.020 for webAI. Each value is then fixed across V1, V2, V3 public, and all controller seeds. Calibration uses the structural penalty as a computation proxy; FLOPs savings are evaluated separately. Appendix~\ref{app:calibration} details the calibration subset and procedure.
\section{Conclusion}
We introduced \method to allocate representation budgets to individual queries in visual document retrieval. Its Budget Allocator uses early query features to choose among a joint family of depth--width--token representations, including the native full route. Separating utility prediction from resource preference lets the policy support different operating points without retraining. Across five backbones on ViDoRe V1, V2, and V3 public, the learned policies retain much of the native retrieval quality while selecting compact routes, with different quality and allocation responses across backbones. By learning the retrieval value of available representations, \method makes their allocation a query-level decision that can adapt to the desired resource preference.

\section*{AI use statement}
We used GPT-5.3-Codex to assist with code debugging and GPT-5.5 for grammar checking and language polishing. These tools also assisted with feedback on experimental methodology, interpretation of results, and refinement of mathematical statements and proofs. The authors have reviewed and verified all AI-assisted content. The authors take responsibility for the final content of the paper, including all AI-assisted text, code, claims, and artifacts.

\section*{Reproducibility statement}
Appendix~\ref{app:implementation} specifies the released model identifiers, route construction, data partitions, and training configurations for the representations and Budget Allocator. Appendix~\ref{app:data} describes the evaluation datasets, preprocessing, scoring, and aggregation procedures. Appendix~\ref{app:calibration} details preference calibration and sensitivity evaluation, while Appendix~\ref{app:flops} provides the formulas and counting conventions for the FLOPs estimates. Appendix~\ref{app:proofs} states the assumptions and gives proofs of the decision properties.

\section*{Ethics statement}
This work studies visual document retrieval using released models and existing document benchmarks. Retrieval errors and biases inherited from these models and corpora may affect which documents are surfaced to users. Applications should respect document access permissions, privacy requirements, and model and dataset licenses, while preserving attribution to retrieved sources.
\bibliographystyle{iclr2027_conference}
\bibliography{references}
\clearpage
\appendix
\section{Implementation and route construction}
\label{app:implementation}
\paragraph{Released model identities.}
The evaluated models are \texttt{vidore/colpali-v1.3-merged}, \texttt{vidore/colqwen2-v1.0-merged}, \texttt{vidore/colqwen2.5-v0.2} with its matching \texttt{vidore/colqwen2.5-base}, \texttt{webAI-Official/webAI-ColVec1.1-8b}, and \texttt{nvidia/nemotron-colembed-vl-8b-v2}. The ColQwen2.5 release adapter remains part of the frozen native model.

\begin{table}[htbp]
\centering\small
\caption{Route grid. Each depth--width pair is crossed with token retention ratios $\{0.25,0.50,0.75,1.00\}$. Native Full uses maximal depth and width with token retention 1.}
\label{tab:grids}
\begin{tabular}{lll}
\toprule
Backbone & Layers & Dimensions \\
\midrule
ColPali & 4, 9, 14, 18 & 32, 64, 96, 128 \\
ColQwen2 & 7, 14, 21, 28 & 32, 64, 96, 128 \\
ColQwen2.5 & 9, 18, 27, 36 & 32, 64, 96, 128 \\
webAI & 8, 16, 24, 32 & 160, 320, 480, 640 \\
Nemotron & 9, 18, 27, 36 & 1024, 2048, 3072, 4096 \\
\bottomrule
\end{tabular}
\end{table}

\paragraph{Candidate grid and structural penalty.}
Each axis uses nominal retention ratios $\{1/4,1/2,3/4,1\}$, giving 64 routes with integer layer coordinates listed in Table~\ref{tab:grids}. Equation~\ref{eq:cost} uses $\alpha=0.7$ for all backbones and resource preferences.

\paragraph{Valid tokens and rounding.}
The mask selects the first $\lceil\tau n\rceil$ valid tokens according to the released model's output order, independently for each sequence. It does not select the first positions of a padded tensor. The retained page representation can include valid non-patch tokens admitted by the native mask, so ``token prefix'' should not be read as an importance-ranked patch set. ColPali's selected layers are $4,9,14,18$, whose actual retained proportions are $22.22\%,50\%,77.78\%,100\%$. Cost and the reported depth-retention statistics use actual layers, not the nominal quartile labels.

\paragraph{Projection initialization.}
Low-rank adaptation maps use rank 32, normally initialized down-projections with standard deviation 0.02, and zero-initialized up-projections. Linear intermediate heads copy the released projection when available; these compact heads remain separate from the native retrieval path.

\paragraph{Optimization.}
The retriever is frozen; the projection bank is trained with AdamW, learning rate $5\times10^{-5}$, weight decay 0.01, 3\% linear warmup, and cosine decay. Each run uses three epochs over 118,195 ColPali training query--page rows, batch size four on each of eight GPUs (global batch size 32), and cross-device document negatives. The resulting checkpoint is at step 11,082. Elastic training and the deterministic shuffled route cycle use seed 42. Equation~\ref{eq:loss} uses $\beta=1$ and $T_{\mathrm{KD}}=0.05$. The CE term uses unscaled MaxSim logits; the temperature and its squared factor apply to KL distillation.

\paragraph{Budget Allocator fitting.}
Document identifiers are assigned to an 80/10/10 split using a seeded SHA-256 hash (seed 42), yielding 95,095 controller-training queries, 11,566 validation queries, and 11,534 test queries. Route utilities are generated for all 118,195 training queries. The feature transform and regressor are fitted only on the controller-training partition; the elastic retriever was trained on all 118,195 rows. Candidate groups contain 64 unique documents from the same source and split where available, with deterministic same-split filling for incomplete groups. Each training query uses its paired page as the labeled positive in this local pool. All queries from a page retain the same split; merged cache validation checks disjoint hashes of image filenames, not pixel-content hashes. These local single-positive utilities differ from external multi-positive evaluation.

The early pooled feature is rounded consistently with the FP16 cache and L2-normalized. SparseRandomProjection has 256 output features and dense output. ExtraTreesRegressor uses 100 estimators, minimum leaf size 20, and a feature fraction of 0.5. Projection and regressor use the corresponding controller seed from $\{42,123,2026\}$. These three separately fitted controllers share one elastic checkpoint per backbone; averaging their reported metrics is not prediction ensembling.

\paragraph{Materializing document representation streams.}
Compact width/token routes share prefixes within a layer stream, with dimension-prefix renormalization at use time. Different layers require separate streams; the transformed final-layer compact stream must also be stored separately from Native Full or computed on demand. For materialized streams $\mathcal S$ and $b$ bytes per scalar, the dense vector payload is
\[
S_{\mathrm{payload}}=b\sum_{s\in\mathcal S}\sum_{x\in\mathcal X} B^{\max}_{x,s}D^{\max}_s.
\]
Here $B^{\max}_{x,s}$ is the maximum valid vector count stored for page $x$ in stream $s$, and $D^{\max}_s$ is that stream's maximum embedding width. Metadata, indexing structures, alignment, quantization, and temporary buffers add to this payload.

\section{External datasets and scoring}
\label{app:data}
ViDoRe V1 comprises ten task configurations with 3,871 queries, and ViDoRe V2 comprises four with 1,152 queries. Table~\ref{tab:sizes} lists the query and corpus sizes for each task.

\begin{table}[htbp]
\centering\small
\caption{ViDoRe V1/V2 query and corpus sizes. Each task includes all judged queries and unique corpus pages. ESG v2 and ESG Human v2 share a corpus but use different queries.}
\label{tab:sizes}
\begin{tabular}{lrr}
\toprule
Dataset & Queries & Pages \\
\midrule
DocVQA & 451 & 500 \\
InfoVQA & 494 & 500 \\
TabFQuAD & 280 & 70 \\
ArXivQA & 500 & 500 \\
TAT-DQA & 1646 & 277 \\
Shift & 100 & 1000 \\
Synthetic AI & 100 & 968 \\
Synthetic Energy & 100 & 977 \\
Synthetic Government & 100 & 972 \\
Synthetic Healthcare & 100 & 965 \\
ESG v2 & 228 & 1538 \\
Biomedical v2 & 640 & 1016 \\
Economics v2 & 232 & 452 \\
ESG Human v2 & 52 & 1538 \\
\bottomrule
\end{tabular}
\end{table}

\paragraph{Controlled inputs and precision.}
For V1/V2, every configuration of a backbone uses the same input queries, page images, relevance judgments, and scoring precision. Images use a 262,144-pixel cap and a 256-visual-token processor limit where applicable; ColPali uses its native fixed-resolution processor. Retrieval embeddings are stored in FP16, and exact MaxSim scoring is computed in FP32. The native full reference is evaluated under this same preprocessing and numerical representation.

\paragraph{Quality replay.}
Full-forward hidden states provide query--route rankings against each task's complete corpus. V1/V2 comparisons replay cached utilities and controller predictions without refitting or new GPU forward passes. This offline enumeration is used only for evaluation; query-time selection uses early features and the fixed penalty table.

\paragraph{Identity and relevance.}
For V1, corpus identity is the image filename. Repeated query text is collapsed and its positive-page set is the union of matching examples. Corpus-only rows remain candidate pages. The first four dataset identifiers include the benchmark's \texttt{test\_subsampled} suffix; ``full corpus'' refers to all pages of those benchmark configurations, not the source datasets before benchmark construction. For V2, corpus, query, and qrels tables are joined by their explicit identifiers. Positive integer grades are preserved, and repeated judgments use the maximum grade. Only judged queries are scored. Source reports list any exclusions explicitly.

\paragraph{Metrics.}
We use \texttt{pytrec\_eval} with \texttt{ndcg\_cut.5}, \texttt{recall.1,5}, and \texttt{recip\_rank}, including its relevance-gain and tie conventions. Recall is the fraction of relevant pages retrieved; for queries with multiple positives, recall@1 is not simply hit rate. For each controller seed in $\{42,123,2026\}$, we first average tasks equally within each V1/V2 version, then compute the mean and sample SD (ddof$=1$) of the three seed-level means. Preferences and the elastic checkpoint are fixed across those seeds. V1/V2 supplementary overall results and component controls use the unweighted 14-task mean, so task size does not determine weight. Fixed-route and random-assignment controls use seed 42. The exported numerical companion also includes reciprocal rank and all fixed-route metrics.

\begin{table}[t]
\centering\scriptsize
\caption{Per-task nDCG@5 for Native Full / Router on V1/V2. Router scores average three controller seeds at the main preferences.}
\label{tab:perdataset}
\begin{tabular}{lrrrrr}
\toprule
Dataset & ColPali & ColQwen2 & ColQwen2.5 & webAI & Nemotron \\
\midrule
DocVQA & 0.575 / 0.571 & 0.555 / 0.556 & 0.558 / 0.560 & 0.585 / 0.599 & 0.626 / 0.627 \\
InfoVQA & 0.836 / 0.834 & 0.865 / 0.865 & 0.842 / 0.840 & 0.896 / 0.891 & 0.905 / 0.900 \\
TabFQuAD & 0.859 / 0.861 & 0.891 / 0.891 & 0.885 / 0.884 & 0.967 / 0.968 & 0.975 / 0.971 \\
ArXivQA & 0.825 / 0.821 & 0.863 / 0.862 & 0.866 / 0.862 & 0.899 / 0.899 & 0.915 / 0.912 \\
TAT-DQA & 0.696 / 0.694 & 0.676 / 0.676 & 0.625 / 0.626 & 0.669 / 0.664 & 0.731 / 0.729 \\
Shift & 0.748 / 0.736 & 0.576 / 0.570 & 0.651 / 0.644 & 0.794 / 0.790 & 0.870 / 0.869 \\
Synthetic AI & 0.974 / 0.952 & 0.952 / 0.950 & 0.970 / 0.965 & 0.989 / 0.993 & 0.980 / 0.984 \\
Synthetic Energy & 0.950 / 0.938 & 0.888 / 0.889 & 0.956 / 0.951 & 0.961 / 0.958 & 0.957 / 0.950 \\
Synthetic Government & 0.952 / 0.943 & 0.917 / 0.910 & 0.939 / 0.944 & 0.938 / 0.935 & 0.984 / 0.982 \\
Synthetic Healthcare & 0.975 / 0.966 & 0.871 / 0.869 & 0.935 / 0.938 & 0.968 / 0.953 & 0.981 / 0.976 \\
ESG v2 & 0.525 / 0.521 & 0.414 / 0.413 & 0.389 / 0.390 & 0.562 / 0.566 & 0.561 / 0.569 \\
Biomedical v2 & 0.557 / 0.547 & 0.549 / 0.547 & 0.579 / 0.576 & 0.656 / 0.654 & 0.653 / 0.639 \\
Economics v2 & 0.484 / 0.489 & 0.410 / 0.406 & 0.499 / 0.497 & 0.553 / 0.577 & 0.584 / 0.576 \\
ESG Human v2 & 0.586 / 0.580 & 0.370 / 0.367 & 0.437 / 0.436 & 0.655 / 0.667 & 0.649 / 0.659 \\
\bottomrule
\end{tabular}
\end{table}

\subsection{ViDoRe V3 public evaluation}
\label{app:v3}
ViDoRe V3 public evaluation uses 2,419 English queries from eight public datasets and their complete document corpora (Table~\ref{tab:v3sizes}).

\begin{table}[htbp]
\centering\small
\caption{ViDoRe V3 public query and corpus sizes. All corpora are complete; dataset revisions are recorded in the numerical companion.}
\label{tab:v3sizes}
\begin{tabular}{lrr}
\toprule
Dataset & English queries & Corpus pages \\
\midrule
hr & 318 & 1,110 \\
finance\_en & 309 & 2,942 \\
industrial & 283 & 5,244 \\
pharmaceuticals & 364 & 2,313 \\
computer\_science & 215 & 1,360 \\
energy & 308 & 2,225 \\
physics & 302 & 1,674 \\
finance\_fr & 320 & 2,384 \\
\midrule
Total & 2,419 & 19,252 \\
\bottomrule
\end{tabular}
\end{table}

V3 retains relevance grades 1 and 2 and uses \texttt{pytrec\_eval} for nDCG@5/10, recall@1/5/10, and reciprocal rank. For each of the three controller seeds, retrieval metrics and full-route shares are averaged equally over the eight datasets; we report the mean and sample SD of these seed-level averages. The native reference, elastic checkpoint, and per-backbone preference remain fixed. Table~\ref{tab:v3metrics} reports supplementary nDCG@10 and MRR.

V3 uses the preprocessing above. Native Full retains its original embedding dtype, whereas routed outputs, including the full route, are stored in FP16 before FP32 MaxSim scoring. Quality is evaluated by full-forward encoding and route replay.

\begin{table}[htbp]
\centering\small
\caption{Supplementary ViDoRe V3 public metrics at the main preferences. Router reports mean $\pm$ sample SD over three controller seeds after equal averaging across eight datasets. Native Full is fixed.}
\label{tab:v3metrics}
\begin{tabular}{lrrr}
\toprule
Backbone & Full nDCG@10 & Router nDCG@10 & Router MRR \\
\midrule
ColPali & 0.4821 & $0.4676\pm0.0019$ & $0.5890\pm0.0022$ \\
ColQwen2 & 0.3714 & $0.3692\pm0.0003$ & $0.4672\pm0.0008$ \\
ColQwen2.5 & 0.4532 & $0.4509\pm0.0011$ & $0.5617\pm0.0019$ \\
webAI & 0.5634 & $0.5452\pm0.0015$ & $0.6538\pm0.0025$ \\
Nemotron & 0.5731 & $0.5551\pm0.0026$ & $0.6695\pm0.0037$ \\
\bottomrule
\end{tabular}
\end{table}

\Needspace{24\baselineskip}
\subsection{Supplementary recall results}
\label{app:recall}
\begin{table}[H]
\centering\small
\caption{\textbf{Recall across ViDoRe benchmarks.} R@1 and R@5 ($\times100$) are macro-averaged over tasks within each benchmark. Shaded \method rows average three controller seeds at the preferences in Table~\ref{tab:main}; native baselines are fixed.}
\label{tab:recall}
\begingroup
\setlength{\tabcolsep}{2pt}
\renewcommand{\arraystretch}{1.08}
\begin{tabular*}{\linewidth}{@{\extracolsep{\fill}}>{\raggedright\arraybackslash}p{0.37\linewidth}rrrrrr@{}}
\toprule
\rowcolor{datasetgray}Backbone & \multicolumn{2}{c}{\textbf{ViDoRe V1}} & \multicolumn{2}{c}{\textbf{ViDoRe V2}} & \multicolumn{2}{c}{\textbf{ViDoRe V3 public}} \\
 & R@1 $\uparrow$ & R@5 $\uparrow$ & R@1 $\uparrow$ & R@5 $\uparrow$ & R@1 $\uparrow$ & R@5 $\uparrow$ \\
\midrule
ColPali & 76.58 & 89.87 & 24.92 & 50.17 & 21.36 & 42.08 \\
\rowcolor{routepurple!8}
ColPali \textbf{+ \method} & 75.76 & 89.21 & 25.69 & 49.18 & 20.65 & 41.07 \\
\addlinespace[2pt]
ColQwen2 & 72.14 & 87.21 & 19.70 & 40.49 & 15.37 & 33.35 \\
\rowcolor{routepurple!8}
ColQwen2 \textbf{+ \method} & 71.94 & 87.11 & 19.53 & 40.42 & 15.17 & 33.37 \\
\addlinespace[2pt]
ColQwen2.5 & 74.86 & 88.46 & 21.13 & 44.64 & 18.90 & 39.61 \\
\rowcolor{routepurple!8}
ColQwen2.5 \textbf{+ \method} & 74.70 & 88.37 & 21.22 & 44.44 & 18.83 & 39.51 \\
\addlinespace[2pt]
webAI-ColVec1.1-8B & 80.05 & 91.67 & 28.31 & 56.76 & 24.63 & 50.08 \\
\rowcolor{routepurple!8}
webAI-ColVec1.1-8B \textbf{+ \method} & 79.81 & 91.77 & 28.62 & 58.00 & 23.50 & 48.63 \\
\addlinespace[2pt]
Nemotron ColEmbed V2 (8B) & 83.96 & 93.43 & 30.60 & 56.27 & 25.68 & 50.76 \\
\rowcolor{routepurple!8}
Nemotron ColEmbed V2 (8B)\newline \textbf{+ \method} & 83.43 & 93.29 & 30.60 & 56.16 & 24.75 & 49.16 \\
\bottomrule
\end{tabular*}
\endgroup
\end{table}

\section{Variant definitions and supplementary representation controls}
\label{app:training}
\subsection{Variant study on Industrial}
\label{app:variants}
The Industrial study evaluates ColPali, ColQwen2, ColQwen2.5, and webAI on 283 queries and all 5,244 pages, with one trained elastic checkpoint and the seed-42 utility predictor per backbone. Preprocessing and precision follow Appendix~\ref{app:v3}; FLOPs use the actual token lengths of these queries and this corpus. Variant definitions are given in Section~\ref{sec:variants}. Fixed 3D Budget uses our trained representation family and does not reproduce MM-Matryoshka training.

For random selection, \texttt{default\_rng(42)} draws one uniform route index from $\{0,\ldots,63\}$ per query in the fixed query order. The normalized indices are shared across backbones and fixed across preferences.

All four adaptive policies use
\[
\Lambda_{\mathrm{variant}}=\{0,.01,.0175,.02,.025,.0375,.05,.075,.1,.2,.3,.4,.5,.6,.7,.8,.9,1\}.
\]
Quality scoring uses batches of four queries and eight pages; online FLOPs use one query and the same eight-page matching blocks (Appendix~\ref{app:flops}). The ColQwen2.5 comparison in Section~\ref{sec:variants} uses $\lambda=0.1$. The webAI near-native example uses $\lambda=0$; its depth-, width-, and token-fixed comparisons use $\lambda=0.05$, 0, and 0, respectively.

\subsection{Supplementary V1/V2 controls}
Table~\ref{tab:controls} reports seed-42 controls on the 14-task V1/V2 suite; Router denotes the Budget Allocator. The listed fixed routes reuse the trained all-axis checkpoint and assign the same configuration to every query, unlike the Industrial variants that retain two adaptive axes.

\begin{table}[htbp]
\centering\small
\caption{\textbf{Routing and fixed-budget comparisons on V1/V2.} nDCG@5 is macro-averaged over 14 tasks. Policies share the trained representation family; the Budget Allocator and uniform random control use seed 42. L/D/T denotes nominal depth/width/token retention for fixed routes.}
\label{tab:controls}
\begingroup
\setlength{\tabcolsep}{4pt}
\renewcommand{\arraystretch}{1.12}
\begin{tabular*}{\linewidth}{@{\extracolsep{\fill}}llrrr@{}}
\toprule
Policy / configuration & L/D/T (\%) & ColPali & ColQwen2 & ColQwen2.5 \\
\midrule
\rowcolor{datasetgray}\multicolumn{5}{l}{\textbf{Reference and routing policies}} \\
Native Full & 100/100/100 & 0.7531 & 0.6998 & 0.7237 \\
random & All 64 routes & 0.4952 & 0.4112 & 0.4463 \\
\rowcolor{routepurple!8}
\textbf{\method} & Adaptive & 0.7485 & 0.6978 & 0.7205 \\
\addlinespace[4pt]
\rowcolor{datasetgray}\multicolumn{5}{l}{\textbf{Fixed configurations}} \\
Width reduction & 100/75/100 & 0.7452 & 0.6864 & 0.7099 \\
Token reduction & 100/100/75 & 0.7435 & 0.6774 & 0.6946 \\
Depth reduction & 75/100/100 & 0.6743 & 0.6246 & 0.6273 \\
Joint 75\% & 75/75/75 & 0.6701 & 0.6263 & 0.6055 \\
Joint 50\% & 50/50/50 & 0.4853 & 0.4028 & 0.4850 \\
\bottomrule
\end{tabular*}
\endgroup
\end{table}

\begin{table}[htbp]
\centering\small
\caption{Final-layer nDCG@5 for native prefix truncation and trained compact routes. Subscripts denote the shared width/token retention percentage; depth remains full.}
\label{tab:prefix}
\begin{tabular}{lrrrr}
\toprule
Backbone & Native$_{50}$ & Trained$_{50}$ & Native$_{75}$ & Trained$_{75}$ \\
\midrule
ColPali & 0.6732 & 0.6738 & 0.7359 & 0.7344 \\
ColQwen2 & 0.5310 & 0.5283 & 0.6676 & 0.6675 \\
ColQwen2.5 & 0.5424 & 0.5411 & 0.6824 & 0.6810 \\
webAI & 0.7377 & 0.7400 & 0.7808 & 0.7815 \\
Nemotron & 0.7356 & 0.7374 & 0.8059 & 0.8006 \\
\bottomrule
\end{tabular}
\end{table}

\section{Confirmation scores and full-query evaluation}
\label{app:confirmation}
The 224 confirmation queries comprise 16 per V1/V2 task, disjoint from the calibration queries but included in the full evaluation. All three controllers use the same confirmation queries and fixed per-backbone preference. For Table~\ref{tab:confirmationdetail}, we average each query's Router $-$ Full difference over seeds, bootstrap queries within each task for 1,000 draws, and average tasks equally. All five 95\% intervals include zero.

\begin{table}[htbp]
\centering\small
\caption{Confirmation-set nDCG@5 at the main preferences. Router SD is the sample SD over three controller seeds. The 95\% query-bootstrap intervals are for Router minus Native Full in percentage points. Mean penalty averages the selected $c(r)$.}
\label{tab:confirmationdetail}
\begin{tabular}{lrrrrr}
\toprule
Backbone & Full & Router & Router SD & 95\% CI of $\Delta$ (pp) & Mean penalty \\
\midrule
ColPali & 0.7173 & 0.7140 & 0.0063 & [-1.52, +0.79] & 0.9039 \\
ColQwen2 & 0.6765 & 0.6778 & 0.0018 & [-0.18, +0.53] & 0.9879 \\
ColQwen2.5 & 0.7203 & 0.7199 & 0.0037 & [-0.40, +0.29] & 0.9838 \\
webAI & 0.7657 & 0.7614 & 0.0027 & [-2.32, +1.66] & 0.6297 \\
Nemotron & 0.7820 & 0.7854 & 0.0006 & [-0.83, +1.49] & 0.7862 \\
\bottomrule
\end{tabular}
\end{table}

\paragraph{Full-query evaluation of the same selected preferences.}
\label{app:fullsuite}
Table~\ref{tab:fullsuite} aggregates all 5,023 V1/V2 queries, including calibration and confirmation subsets, at the same preferences and controller seeds. Table~\ref{tab:main} separates these records by benchmark. Overall metric means equal $(10\mu_{\mathrm{V1}}+4\mu_{\mathrm{V2}})/14$; overall SDs are computed from seed-level overall means, not averaged from benchmark SDs. The nDCG@5 changes from Native Full are $-0.655\pm0.173$, $-0.182\pm0.108$, $-0.121\pm0.176$, $+0.159\pm0.216$, and $-0.207\pm0.132$ percentage points for ColPali, ColQwen2, ColQwen2.5, webAI, and Nemotron, respectively (mean $\pm$ sample SD).

\begin{table}[htbp]
\centering\footnotesize
\setlength{\tabcolsep}{2.3pt}
\caption{V1/V2 results on all 5,023 queries, including calibration and confirmation subsets. Metrics are macro-averaged over 14 tasks; Router reports mean $\pm$ sample SD over three controller seeds. Full (\%) is the task-macro query share selecting Native Full.}
\label{tab:fullsuite}
\begin{tabular}{@{}llrrrrr@{}}
\toprule
Backbone & Policy & nDCG@5 $\uparrow$ & R@1 $\uparrow$ & R@5 $\uparrow$ & MRR $\uparrow$ & Full (\%) \\
\midrule
ColPali & Full & 0.7531 & 0.6182 & 0.7853 & 0.7757 & 100.00 \\
 & Router & $0.7465\pm0.0017$ & $0.6146\pm0.0021$ & $0.7777\pm0.0016$ & $0.7725\pm0.0026$ & $28.54\pm2.52$ \\
ColQwen2 & Full & 0.6998 & 0.5715 & 0.7386 & 0.7227 & 100.00 \\
 & Router & $0.6979\pm0.0011$ & $0.5696\pm0.0007$ & $0.7377\pm0.0007$ & $0.7205\pm0.0013$ & $85.56\pm1.45$ \\
ColQwen2.5 & Full & 0.7237 & 0.5951 & 0.7594 & 0.7497 & 100.00 \\
 & Router & $0.7225\pm0.0018$ & $0.5942\pm0.0017$ & $0.7582\pm0.0028$ & $0.7487\pm0.0014$ & $79.37\pm2.90$ \\
webAI & Full & 0.7922 & 0.6527 & 0.8169 & 0.8158 & 100.00 \\
 & Router & $0.7938\pm0.0022$ & $0.6518\pm0.0011$ & $0.8212\pm0.0020$ & $0.8180\pm0.0025$ & $0.02\pm0.01$ \\
Nemotron & Full & 0.8123 & 0.6872 & 0.8281 & 0.8373 & 100.00 \\
 & Router & $0.8102\pm0.0013$ & $0.6833\pm0.0024$ & $0.8268\pm0.0008$ & $0.8364\pm0.0019$ & $8.86\pm1.16$ \\
\bottomrule
\end{tabular}
\end{table}

\section{Routing distributions at the selected preferences}
\label{app:distributions}
\begin{table}[htbp]
\centering\small
\caption{Full-route query shares and mean retained budgets (\%). Values are task-macro means $\pm$ sample SD over three controller seeds within each benchmark; the overall V1/V2 group averages all 14 tasks equally. Depth and width are relative to native values; Tokens denotes $\tau$ before rounding. Full counts queries selecting $(L,D,1)$; SDs are in percentage points.}
\label{tab:budgetversions}
\begin{tabular}{lrrrr}
\toprule
Backbone & Full (\%) & Depth (\%) & Width (\%) & Tokens (\%) \\
\midrule
\rowcolor{datasetgray}\multicolumn{5}{l}{\textbf{ViDoRe V1}} \\
ColPali & $33.40\pm2.08$ & $99.66\pm0.14$ & $82.83\pm0.16$ & $91.69\pm0.51$ \\
ColQwen2 & $87.69\pm1.73$ & $100.00\pm0.00$ & $97.14\pm0.40$ & $99.73\pm0.09$ \\
ColQwen2.5 & $78.72\pm2.36$ & $100.00\pm0.00$ & $94.77\pm0.68$ & $99.86\pm0.06$ \\
webAI & $0.03\pm0.01$ & $74.97\pm0.11$ & $36.42\pm0.54$ & $97.80\pm0.37$ \\
Nemotron & $11.68\pm1.42$ & $98.39\pm0.34$ & $33.83\pm1.03$ & $98.41\pm0.10$ \\
\rowcolor{datasetgray}\multicolumn{5}{l}{\textbf{ViDoRe V2}} \\
ColPali & $16.38\pm3.74$ & $99.84\pm0.03$ & $76.08\pm1.02$ & $88.84\pm1.03$ \\
ColQwen2 & $80.24\pm0.73$ & $100.00\pm0.00$ & $95.57\pm0.13$ & $99.42\pm0.15$ \\
ColQwen2.5 & $81.01\pm4.41$ & $100.00\pm0.00$ & $95.40\pm0.96$ & $99.82\pm0.20$ \\
webAI & $0.00\pm0.00$ & $74.99\pm0.01$ & $30.75\pm0.76$ & $97.98\pm1.00$ \\
Nemotron & $1.82\pm0.65$ & $98.71\pm0.15$ & $26.41\pm0.50$ & $98.38\pm0.25$ \\
\rowcolor{datasetgray}\multicolumn{5}{l}{\textbf{ViDoRe V3 public}} \\
ColPali & $19.56\pm1.55$ & $99.17\pm0.18$ & $76.86\pm0.44$ & $87.99\pm0.57$ \\
ColQwen2 & $81.93\pm2.31$ & $100.00\pm0.00$ & $96.01\pm0.43$ & $99.42\pm0.15$ \\
ColQwen2.5 & $72.78\pm2.43$ & $100.00\pm0.00$ & $93.89\pm0.44$ & $99.21\pm0.22$ \\
webAI & $0.00\pm0.00$ & $74.14\pm0.56$ & $31.97\pm0.70$ & $96.53\pm0.39$ \\
Nemotron & $3.23\pm0.98$ & $94.95\pm0.91$ & $27.54\pm0.70$ & $97.40\pm0.08$ \\
\rowcolor{datasetgray}\multicolumn{5}{l}{\textbf{V1/V2 overall (14 tasks)}} \\
ColPali & $28.54\pm2.52$ & $99.71\pm0.10$ & $80.90\pm0.25$ & $90.88\pm0.61$ \\
ColQwen2 & $85.56\pm1.45$ & $100.00\pm0.00$ & $96.69\pm0.30$ & $99.64\pm0.11$ \\
ColQwen2.5 & $79.37\pm2.90$ & $100.00\pm0.00$ & $94.95\pm0.75$ & $99.85\pm0.03$ \\
webAI & $0.02\pm0.01$ & $74.98\pm0.08$ & $34.80\pm0.31$ & $97.85\pm0.54$ \\
Nemotron & $8.86\pm1.16$ & $98.48\pm0.27$ & $31.71\pm0.85$ & $98.41\pm0.08$ \\
\bottomrule
\end{tabular}
\end{table}

Table~\ref{tab:budgetversions} reports mean retained budgets by benchmark and for the combined V1/V2 suite. The remaining distribution tables use the same 14-task V1/V2 macro average. Full requires all three axes to be maximal. Depth-bin labels are nominal; mean depth retention uses actual integer layers.

\begin{table}[htbp]
\centering\small
\caption{V1/V2 query shares by per-axis budget (\%), using 14-task macro means $\pm$ sample SD over three controller seeds. Depth bins are nominal (ColPali's 75\% bin uses 14/18 layers); token retention is shared by query and document sequences.}
\label{tab:axisdistributions}
\begin{tabular}{lrrrr}
\toprule
Backbone & 25\% & 50\% & 75\% & 100\% \\
\midrule
\rowcolor{datasetgray}\multicolumn{5}{l}{\textbf{Depth}} \\
ColPali & $0.00\pm0.00$ & $0.00\pm0.00$ & $1.29\pm0.45$ & $98.71\pm0.45$ \\
ColQwen2 & $0.00\pm0.00$ & $0.00\pm0.00$ & $0.01\pm0.01$ & $99.99\pm0.01$ \\
ColQwen2.5 & $0.00\pm0.00$ & $0.00\pm0.00$ & $0.01\pm0.00$ & $99.99\pm0.00$ \\
webAI & $0.20\pm0.16$ & $0.01\pm0.00$ & $99.47\pm0.15$ & $0.32\pm0.01$ \\
Nemotron & $0.00\pm0.00$ & $0.00\pm0.00$ & $6.08\pm1.06$ & $93.92\pm1.06$ \\
\rowcolor{datasetgray}\multicolumn{5}{l}{\textbf{Width}} \\
ColPali & $0.00\pm0.00$ & $15.66\pm0.56$ & $45.05\pm1.74$ & $39.29\pm1.31$ \\
ColQwen2 & $0.00\pm0.00$ & $0.11\pm0.01$ & $13.03\pm1.18$ & $86.87\pm1.19$ \\
ColQwen2.5 & $0.01\pm0.00$ & $0.08\pm0.11$ & $20.01\pm2.88$ & $79.91\pm2.95$ \\
webAI & $66.88\pm0.58$ & $27.37\pm0.05$ & $5.42\pm0.57$ & $0.33\pm0.07$ \\
Nemotron & $90.91\pm1.07$ & $0.20\pm0.09$ & $0.02\pm0.01$ & $8.87\pm1.17$ \\
\rowcolor{datasetgray}\multicolumn{5}{l}{\textbf{Tokens}} \\
ColPali & $0.00\pm0.00$ & $0.00\pm0.00$ & $36.50\pm2.44$ & $63.50\pm2.44$ \\
ColQwen2 & $0.00\pm0.00$ & $0.00\pm0.00$ & $1.41\pm0.43$ & $98.58\pm0.43$ \\
ColQwen2.5 & $0.00\pm0.00$ & $0.00\pm0.00$ & $0.60\pm0.11$ & $99.40\pm0.11$ \\
webAI & $0.00\pm0.00$ & $0.01\pm0.00$ & $8.58\pm2.17$ & $91.41\pm2.17$ \\
Nemotron & $0.00\pm0.00$ & $0.00\pm0.00$ & $6.37\pm0.32$ & $93.63\pm0.32$ \\
\bottomrule
\end{tabular}
\end{table}

\begin{table}[htbp]
\centering\small
\caption{Query shares by number of reduced axes (\%), mean $\pm$ sample SD over three controller seeds. Zero reduced axes denotes Native Full; displayed zeros may reflect rounding.}
\begin{tabular}{lrrrr}
\toprule
Backbone & Full (0 axes) & 1 axis & 2 axes & 3 axes \\
\midrule
ColPali & $28.54\pm2.52$ & $45.54\pm1.53$ & $24.80\pm1.16$ & $1.12\pm0.24$ \\
ColQwen2 & $85.56\pm1.45$ & $14.32\pm1.44$ & $0.11\pm0.03$ & $0.00\pm0.01$ \\
ColQwen2.5 & $79.37\pm2.90$ & $20.55\pm2.84$ & $0.07\pm0.06$ & $0.01\pm0.00$ \\
webAI & $0.02\pm0.01$ & $0.56\pm0.09$ & $90.86\pm2.06$ & $8.55\pm2.15$ \\
Nemotron & $8.86\pm1.16$ & $79.44\pm0.63$ & $10.94\pm0.98$ & $0.75\pm0.12$ \\
\bottomrule
\end{tabular}
\end{table}

The numerical companion includes all 64 joint route frequencies per backbone, their sample standard deviations, and per-seed values, together with per-task full-route use and nDCG@5.

\section{Random-assignment controls}
\label{app:assignment}
\begin{table}[htbp]
\centering\small
\caption{Matched random assignment on V1/V2 (seed 42, one permutation), preserving each task's route counts. Scores are 14-task macro nDCG@5 at the main preferences; $\Delta$ is Router minus matched random in percentage points.}
\label{tab:matched}
\begin{tabular}{lrrrr}
\toprule
Backbone & Router & Matched random & $\Delta$ (pp) & Full (\%) \\
\midrule
ColPali & 0.7485 & 0.7432 & +0.530 & 28.22 \\
ColQwen2 & 0.6978 & 0.6976 & +0.018 & 83.91 \\
ColQwen2.5 & 0.7205 & 0.7206 & -0.008 & 77.79 \\
\bottomrule
\end{tabular}
\end{table}

For matched random, we reset \texttt{default\_rng(42)} within each dataset and permute the router's route choices in saved query order, preserving the exact route multiset. This control covers ColPali, ColQwen2, and ColQwen2.5.

For the supplementary V1/V2 uniform-random control in Table~\ref{tab:controls}, we reset the same generator within each task and draw integer indices from all 64 routes, including full. The three backbones use the same normalized route indices. In expectation each route has probability $1/64$, giving a full-route share of 1.5625\%. The realized macro full share in this seed is 0.6791\%, and the realized average depth/width/token retentions are about 63.6\%/63.9\%/60.2\%. It is therefore not distribution-matched to the learned policy. Exact uniform expectations obtained by averaging every route utility are 0.5053, 0.4149, and 0.4547 nDCG@5; Table~\ref{tab:controls} reports the specified random realization instead. Both random controls use one realization.

\section{External calibration and preference sensitivity}
\label{app:calibration}
The main preferences listed in Section~\ref{sec:calibration} are selected with an absolute nDCG@5 tolerance of $\varepsilon_{\mathrm{cal}}=0.005$ (0.5 percentage points) and fixed across all three benchmarks and controller seeds.

Calibration uses 32 queries per V1/V2 task (448 total), with 16 disjoint queries per task (224 total) reserved for confirmation; each task retains its complete corpus. The coarse grids are $\{0,.01,.025,.05,.075,.1\}$ for four backbones and $\{0,.02,.05,.1,.15,.2\}$ for webAI, with midpoint refinement near the feasibility boundary. With representations and controllers fixed, selection averages tasks equally and then averages the three controllers, choosing the lowest mean structural penalty within the quality tolerance. Confirmation scores are reported in Appendix~\ref{app:confirmation}.

The full V1/V2 evaluation contains all 5,023 queries, including the calibration and confirmation subsets. The accompanying records provide the query manifests and calibration protocol.

\begin{figure}[htbp]
\centering
\includegraphics[width=\linewidth]{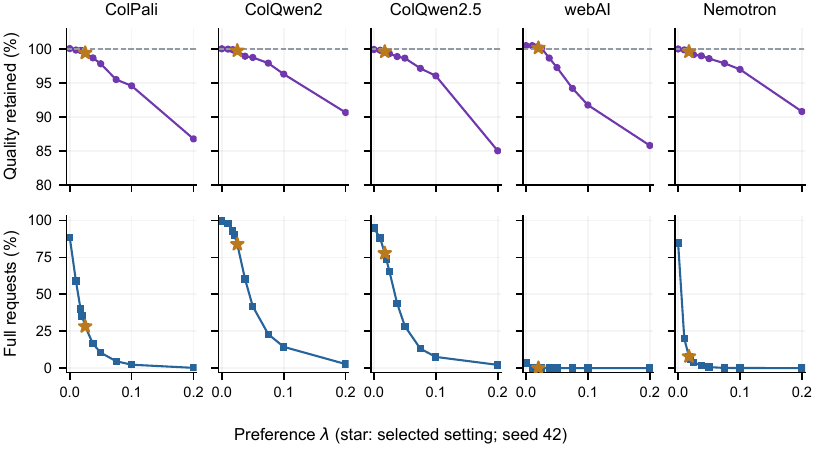}
\caption{\textbf{Supplementary resource-preference sensitivity (seed 42).} Top: nDCG@5 relative to Native Full. Bottom: full-route query share. Both use equal averaging over 14 V1/V2 tasks with all queries and complete corpora. Stars mark selected preferences.}
\label{fig:calibration}
\end{figure}

\begin{table}[htbp]
\centering\small
\caption{V1/V2 results at common resource preferences (seed 42), macro-averaged over 14 tasks. Full (\%) is the query share selecting Native Full; $\Delta$ is the nDCG@5 change from Native Full in percentage points.}
\label{tab:lambda}
\begin{tabular}{lrrrrrr}
\toprule
 & \multicolumn{3}{c}{$\lambda=0.1$} & \multicolumn{3}{c}{$\lambda=0.2$} \\
\cmidrule(lr){2-4}\cmidrule(lr){5-7}
Backbone & nDCG@5 & $\Delta$ (pp) & Full (\%) & nDCG@5 & $\Delta$ (pp) & Full (\%) \\
\midrule
ColPali & 0.7121 & -4.10 & 2.27 & 0.6533 & -9.98 & 0.14 \\
ColQwen2 & 0.6737 & -2.61 & 14.37 & 0.6342 & -6.56 & 2.71 \\
ColQwen2.5 & 0.6948 & -2.88 & 7.54 & 0.6151 & -10.86 & 2.16 \\
webAI & 0.7267 & -6.55 & 0.00 & 0.6795 & -11.27 & 0.00 \\
Nemotron & 0.7877 & -2.46 & 0.03 & 0.7373 & -7.49 & 0.00 \\
\bottomrule
\end{tabular}
\end{table}

The V1/V2 sweep evaluates $\{0,.01,.0175,.02,.025,.0375,.05,.075,.1,.2\}$ from cached seed-42 predictions; Industrial uses the separate 18-point grid in Appendix~\ref{app:variants}. Figure~\ref{fig:calibration} uses the 14-task macro average. In Figure~\ref{fig:preference}(b), each benchmark's task-macro nDCG@5 is instead normalized by its Native Full score, then the V1 and V2 ratios are averaged equally.

Figure~\ref{fig:preference}(a) uses the separate 1K computation workload and ratio-of-sums FLOPs aggregation specified in Appendix~\ref{app:flops}; panel (b) evaluates quality on the complete benchmark corpora.

\section{Route-conditioned FLOPs protocol}
\label{app:flops}
\paragraph{Workload and counting convention.}
The Industrial variant study in Figure~\ref{fig:variants} counts computation for its 283 queries and all 5,244 indexed pages, using the actual query and document lengths from each released processor. The hyperparameter sweep in Figure~\ref{fig:preference}(a) uses a common manifest of 1,000 text queries and 1,000 indexed pages for all five backbones, with controller seed 42. For this workload, query lengths are reproduced through the processors and document lengths come from the stored index. Both studies include Budget Allocator arithmetic only for learned policies; fixed and random choices incur none.

One multiply-add counts as two FLOPs. For a linear map with sequence length $S$, input width $d_i$, and output width $d_o$,
\[
F_{\mathrm{linear}}=2Sd_id_o.
\]
The dense-equivalent count for the two attention matrix products is
\[
F_{\mathbf Q\mathbf K^\top,\mathbf A\mathbf V}=4S^2n_hd_h,
\]
where $\mathbf Q$, $\mathbf K$, and $\mathbf V$ are the attention query, key, and value matrices, $\mathbf A$ is the attention-weight matrix, $n_h$ is the number of query heads, and $d_h$ is their width. Q/K/V/O projections use their actual dimensions, including grouped-query attention. This convention does not model causal-tile skipping or fused-kernel instruction counts. MLP projections, released unmerged LoRA operations, and output projection heads are also counted.

\paragraph{Matching and padding.}
For the 1K hyperparameter workload, all documents are padded to the global maximum retained length for the chosen route, even when dot products are processed in batches of 16 pages. With $N_{\mathrm{doc}}=1{,}000$ pages, $n_q$ valid output query tokens, and page lengths $n_j$, the counted dot-product work is
\[
F_{\mathrm{dot}}(q,r)=2N_{\mathrm{doc}}\lceil\tau n_q\rceil\Big[\max_j\lceil\tau n_j\rceil\Big]d.
\]
Industrial matching pads within each ordered block of eight documents. Let $\mathcal B$ partition its 5,244 pages into these blocks, with the last block retaining its actual size. Its dot-product count is
\[
F_{\mathrm{dot}}^{\mathrm{Industrial}}(q,r)
=2\lceil\tau n_q\rceil d\sum_{B\in\mathcal B}|B|\max_{j\in B}\lceil\tau n_j\rceil.
\]
The encoder still processes the complete input query length. Reducing output token ratio or embedding width does not reduce its layer computation. Intermediate projection heads produce their complete native output width before truncation; their arithmetic is counted at that width.

\paragraph{Model-specific operators.}
WebAI's hybrid layers include gating projections, convolution, and a PyTorch GatedDeltaNet implementation with chunk size 64. Counts include chunk padding, eight matrix products, and the explicit triangular recurrence in this implementation.

ColPali's final-layer query path computes a generation head of width 257,216 from hidden width 2,048, adding $2S\cdot2{,}048\cdot257{,}216$ operations. Its early-exit path avoids this head. The count includes this difference between the two implemented paths.

\paragraph{Router arithmetic and excluded work.}
Counted router work includes sparse feature projection, tree-output averaging, and utility arithmetic. Branch comparisons are not treated as FLOPs. Norms, activations, RoPE, most elementwise arithmetic and reductions, MaxSim max/sum, TopK, index-view normalization/conversion, tokenizer work, and data movement are excluded. The reported FLOPs quantify the listed arithmetic costs; end-to-end latency also depends on memory access and the excluded operations. Offline page encoding and index construction are outside the online-query count.

\paragraph{Aggregation.}
GFLOPs/query divides the sum of operation counts by $10^9$ times the number of queries. Retention is $100\sum_q F_{\mathrm{router}}(q)/\sum_q F_{\mathrm{full}}(q)$, relative to each backbone's Native Full.

\section{Theoretical Analysis}
\label{app:proofs}
Throughout this section, the finite route set $\R$, utility predictor $\hat u$, and structural penalty $c$ are fixed. The selection rule in Equation~\ref{eq:decision} may choose any maximizer.

\begin{proposition}[Resource-preference monotonicity]
For any query $q$ and $0\le\lambda_1<\lambda_2$, let $r_i\in\arg\max_{r\in\R}[\hat u(q,r)-\lambda_i c(r)]$. Then $c(r_2)\le c(r_1)$.
\end{proposition}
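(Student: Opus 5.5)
The plan is the standard exchange (swap) argument for parametric optimization. Fix $q$ and abbreviate $a_i=\hat u(q,r_i)$ and $c_i=c(r_i)$ for $i=1,2$. I will use only the optimality of $r_1$ at $\lambda_1$ and of $r_2$ at $\lambda_2$. Because each inequality holds for any maximizer, tie-breaking in Equation~\ref{eq:decision} plays no role, and finiteness of $\R$ is needed only to guarantee that maximizers exist.

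\textbf{Step 1.} Since $r_1$ maximizes the $\lambda_1$-objective, comparing it with $r_2$ gives $a_1-\lambda_1c_1\ge a_2-\lambda_1c_2$.

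\textbf{Step 2.} Since $r_2$ maximizes the $\lambda_2$-objective, comparing it with $r_1$ gives $a_2-\lambda_2c_2\ge a_1-\lambda_2c_1$.

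\textbf{Step 3.} Adding the two inequalities cancels the utility terms $a_1$ and $a_2$. The result is
\[
-\lambda_1c_1-\lambda_2c_2\ge-\lambda_1c_2-\lambda_2c_1,
\]
which rearranges to $(\lambda_2-\lambda_1)(c_1-c_2)\ge0$.

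\textbf{Step 4.} Since $\lambda_2-\lambda_1>0$, dividing through yields $c_2\le c_1$, which is the claim.

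I expect no real obstacle. The only point needing care is that the penalties need not be distinct and the maximizers need not be unique, so the statement must be read for arbitrary selected maximizers, which the swap argument already covers. The same argument does not require $\lambda\ge0$ or the specific form of $c$ in Equation~\ref{eq:cost}. It also says nothing about evaluated FLOPs, consistent with the remark in Section~\ref{sec:selection}.
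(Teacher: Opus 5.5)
Your proof is correct and is essentially the paper's own argument. The paper chains the same two optimality comparisons as $\lambda_1(c_1-c_2)\le\hat u_1-\hat u_2\le\lambda_2(c_1-c_2)$ where you add them, and both routes arrive at $(\lambda_2-\lambda_1)(c_1-c_2)\ge0$; your side remarks about arbitrary maximizers, not needing $\lambda\ge0$, and not needing the specific form of $c$ are also accurate.
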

\begin{proof}
Write $\hat u_i=\hat u(q,r_i)$ and $c_i=c(r_i)$. Optimality at the two preferences gives
\[
\lambda_1(c_1-c_2)\le\hat u_1-\hat u_2\le\lambda_2(c_1-c_2).
\]
Hence $(\lambda_2-\lambda_1)(c_1-c_2)\ge0$. Since $\lambda_2>\lambda_1$, $c_2\le c_1$.
\end{proof}

For the following comparisons, fix $\lambda$ and a reference utility $u$, and write $v(q,r)=u(q,r)-\lambda c(r)$. The reference utility is the same for all compared decisions; the local supervision utility in Section~\ref{sec:router} and retrieval quality on a different corpus need not coincide.

\begin{proposition}[Utility error bounds decision regret]
Let $r^*\in\arg\max_{r\in\R}v(q,r)$ and $\hat r\in\arg\max_{r\in\R}[\hat u(q,r)-\lambda c(r)]$. If $\max_{r\in\R}|u(q,r)-\hat u(q,r)|\le\delta_u(q)$, then $v(q,r^*)-v(q,\hat r)\le2\delta_u(q)$.
\end{proposition}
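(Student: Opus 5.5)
The plan is the standard three-term decomposition used for plug-in decision rules. Define the predicted objective $\hat v(q,r)=\hat u(q,r)-\lambda c(r)$, so that $\hat r$ maximizes $\hat v(q,\cdot)$ over $\R$ and $r^*$ maximizes $v(q,\cdot)$. The penalty term $\lambda c(r)$ is identical in $v$ and $\hat v$ and therefore cancels. Consequently $v(q,r)-\hat v(q,r)=u(q,r)-\hat u(q,r)$ for every route, and the hypothesis gives $|v(q,r)-\hat v(q,r)|\le\delta_u(q)$ uniformly over $\R$.

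Next, write the regret as a telescoping sum:
\[
v(q,r^*)-v(q,\hat r)=\bigl[v(q,r^*)-\hat v(q,r^*)\bigr]+\bigl[\hat v(q,r^*)-\hat v(q,\hat r)\bigr]+\bigl[\hat v(q,\hat r)-v(q,\hat r)\bigr].
\]
Bound each bracket in turn. The first and third brackets are each at most $\delta_u(q)$ by the uniform bound from the previous step. The middle bracket is at most $0$ because $\hat r$ maximizes $\hat v(q,\cdot)$ over $\R$ and $r^*\in\R$. Summing gives the claimed $2\delta_u(q)$.

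No step is a real obstacle. The only points to check are that both $r^*$ and $\hat r$ lie in the same finite set $\R$, so the optimality of $\hat r$ applies to $r^*$, and that the maxima exist, which holds because $\R$ is finite. The argument also goes through when $\hat r$ is any maximizer, since it uses only the optimality inequality and never uniqueness. This is consistent with the section's convention that the selection rule may return any maximizer. Nonnegativity of the regret, $v(q,r^*)-v(q,\hat r)\ge0$, follows directly from the optimality of $r^*$, though the statement does not require it.
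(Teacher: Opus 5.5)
Your proof is correct and follows the paper's argument exactly. You use the same predicted objective $\hat v(q,r)=\hat u(q,r)-\lambda c(r)$ and the same three-term telescoping decomposition, bounding the outer terms by $\delta_u(q)$ and the middle term by $0$ via the optimality of $\hat r$, which holds for any maximizer as the paper's convention allows.
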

\begin{proof}
Write $\hat v(q,r)=\hat u(q,r)-\lambda c(r)$. The regret equals
\[
\underbrace{v(q,r^*)-\hat v(q,r^*)}_{\le\delta_u(q)}+
\underbrace{\hat v(q,r^*)-\hat v(q,\hat r)}_{\le0}+
\underbrace{\hat v(q,\hat r)-v(q,\hat r)}_{\le\delta_u(q)}.
\]
The second term is nonpositive by the selection rule, giving the bound.
\end{proof}

\begin{proposition}[Oracle adaptive allocation versus fixed routes]
For integrable $v(q,r)$,
\begin{equation}
\E_q\max_r v(q,r)\ge\max_r\E_qv(q,r).
\label{eq:advantage}
\end{equation}
The inequality is strict if different routes uniquely maximize $v$ on query subsets of positive probability.
\end{proposition}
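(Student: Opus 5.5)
The plan is to argue pointwise in $q$ and then integrate, using that $\R$ is finite. For the weak inequality, fix any route $r_0\in\R$. For every query, $\max_r v(q,r)\ge v(q,r_0)$. The maximum over finitely many integrable functions is measurable and integrable, since it is bounded by $\sum_r|v(q,r)|$. Taking expectations therefore gives $\E_q\max_r v(q,r)\ge\E_q v(q,r_0)$. Because $r_0$ is arbitrary and $\R$ is finite, the maximum over $r_0$ on the right is attained, and this yields Equation~\ref{eq:advantage}.

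For strictness, I interpret the hypothesis as follows. There exist routes $r_a\neq r_b$ and query sets $A,B$ of positive probability on which $r_a$, respectively $r_b$, is the unique maximizer of $v(q,\cdot)$. Now take $r_0\in\arg\max_r\E_q v(q,r)$, which exists by finiteness. At least one of $r_a,r_b$ differs from $r_0$; say $r_a\neq r_0$. On $A$, uniqueness gives the strict pointwise gap $\max_r v(q,r)-v(q,r_0)>0$, and everywhere else the gap is nonnegative. A nonnegative integrable function that is strictly positive on a set of positive probability has positive expectation. Hence
\[
\E_q\max_r v(q,r)-\max_r\E_q v(q,r)=\E_q\big[\max_r v(q,r)-v(q,r_0)\big]>0 .
\]

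The only delicate point is the strictness step. One has to choose $r_0$ as the best fixed route before invoking the hypothesis, and then observe that two distinct unique maximizers cannot both coincide with $r_0$. This is why the hypothesis requires \emph{different} routes on two positive-probability subsets, rather than just one. The remaining steps are routine: the measurability of the maximum, and the fact that a strictly positive function on a positive-measure set has positive integral.
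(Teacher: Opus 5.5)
Your proof is correct and follows the paper's argument: pointwise domination integrated for the weak inequality, then strictness because a fixed route differing from one of the two unique maximizers has a strictly positive gap on a positive-probability set. The only difference is cosmetic: you apply this directly to a maximizer of $\E_q v(q,r)$, whereas the paper shows every fixed route has a positive expected gap and then uses finiteness of $\R$ to pass to the best one.
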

\begin{proof}
For every fixed $r_0$, $\max_rv(q,r)\ge v(q,r_0)$ pointwise. Taking expectations and then maximizing over $r_0$ proves the weak inequality. Under the stated condition, any fixed route fails to maximize on at least one positive-probability subset. Its pointwise gap is nonnegative everywhere and strictly positive on that subset, hence its expected gap is positive. Finiteness of the route set makes the best fixed-route gap positive as well.
\end{proof}

\paragraph{Non-dominated selections.}
For $\lambda>0$, a selected route is non-dominated in predicted utility and structural penalty: any route with no lower predicted utility and no higher penalty, with at least one strict improvement, would have a strictly larger selection objective. This observation concerns $(\hat u(q,r),c(r))$.
\end{document}